\newcommand{\E}{\ensuremath{\hbox{\textbf{E}}}}
\newtheorem{theorem}{Theorem}
\newtheorem{lemma}{Lemma}
\newtheorem{definition}{Definition}
\newcommand{\beq}{\begin{equation}}
\newcommand{\eeq}{\end{equation}}
\newcommand{\bea}{\begin{array}}
\newcommand{\ena}{\end{array}}
\newcommand{\bds}{\begin {itemize}}
\newcommand{\eds}{\end {itemize}}
\newcommand{\bdf}{\begin{definition}}
\newcommand{\blm}{\begin{lemma}}
\newcommand{\edf}{\end{definition}}
\newcommand{\elm}{\end{lemma}}
\newcommand{\bthm}{\begin{theorem}}
\newcommand{\ethm}{\end{theorem}}
\newcommand{\bprp}{\begin{prop}}
\newcommand{\eprp}{\end{prop}}
\newcommand{\bcl}{\begin{claim}}
\newcommand{\ecl}{\end{claim}}
\newcommand{\bcr}{\begin{coro}}
\newcommand{\ecr}{\end{coro}}
\newcommand{\bquest}{\begin{question}}
\newcommand{\equest}{\end{question}}
\newcommand{\larrow}{{\larrow}}
\newcommand{\nin}{{\not \in}}
\def\urltilda{\kern -.15em\lower .7ex\hbox{\~{}}\kern .04em}
\begin{document}\title{Asymptotically Optimal Anomaly Detection via\\ Sequential Testing}
\author{Kobi Cohen and Qing Zhao
\thanks{The authors are with the Department of Electrical and Computer Engineering, University of California, Davis. Email: $\left\{\mbox{yscohen, qzhao}\right\}$@ucdavis.edu.}
\thanks{This work was supported in part by National Science Foundation under Grants CCF-1320065 and CNS-1321115.}
\thanks{Part of this work will be presented at the 52nd Annual Allerton Conference on Communication, Control, and Computing, 2014.}
}
\date{}
\maketitle
%
\begin{abstract}
\label{sec:abstract}
Sequential detection of independent anomalous processes among $K$ processes is considered. At each time, only $M$ ($1\leq M\leq K$) processes can be observed, and the observations from each chosen process follow two different distributions, depending on whether the process is normal or abnormal.
Each anomalous process incurs a cost per unit time until its anomaly is identified and fixed.
Switching across processes and state declarations are allowed at all times, while decisions are based on all past observations and actions.
The objective is a sequential search strategy that minimizes the total expected cost incurred by all the processes during the detection process under reliability constraints.
Low-complexity algorithms are established to achieve asymptotically optimal performance as the error constraints approach zero. Simulation results demonstrate strong performance in the finite regime.
\end{abstract}
%
\def\keywords{\vspace{.5em}
{\bfseries\textit{Index Terms}---\,\relax%
}}
\def\endkeywords{\par}
\keywords
Anomaly detection, sequential hypothesis testing, Sequential Probability Ratio Test (SPRT), asymptotic optimality.
\section{Introduction}
\label{sec:intro}

Consider a system consisting of $K$ processes, which can be components (such as routers and paths) in a cyber system, channels in a communication network, potential locations of targets, and sensors monitoring certain events. The state of each process is either normal or abnormal (e.g., the busy/idle state of a channel, the presence or absence of a target or event). Process $k$ is abnormal with prior probability $\pi_k$, independent of other processes. Each abnormal process incurs a cost $c_k$ per unit time until its anomaly is identified and fixed. Normal processes incur no cost. Due to resource constraints, only $M$ ($1\leq M \leq K$) processes can be probed at a time, and the observations from a probed process follow distributions $f_k^{(0)}$ or $f_k^{(1)}$ depending on whether the process is normal or abnormal. The objective is a sequential search strategy that dynamically determines which processes to probe at each time and when to terminate the search so that the total expected cost incurred to the system during the entire detection process is minimized under reliability constraints.

The problem under study finds applications in intrusion detection in cyber systems, spectrum scanning in cognitive radio networks (for quickly catching and utilizing idle channels), target search, and event detection in sensor networks.

\subsection{Main Results}
\label{sec:main_results}

Since observations are drawn in a one-at-a-time manner, the above anomaly detection problem has a clear connection with the classic sequential hypothesis testing problem pioneered by Wald in \cite{Wald_1947_Sequential}. The presence of multiple processes and the objective of minimizing the total cost (rather than the detection delay), however, give the problem another dimension. In addition to quickly declare the state of a process by fully utilizing past observations, the probing order is crucial in minimizing the total cost. It is intuitive that processes with a higher probability of being abnormal and a higher abnormal cost should be probed first. At the same time, it may be desirable to probe processes that require more samples to detect their states (determined by the Kullback-Leibler divergence between $f_k^{(0)}$ and $f_k^{(1)}$) toward the end of the detection process to avoid long delays in catching other potentially abnormal processes.

This anomaly detection problem was first formulated and studied in our prior work \cite{Cohen_2013_Optimal_GlobalSIP, Cohen_2014_Optimal} under the restriction that each process must be probed continuously until its state is declared. In other words, switching across processes is allowed only when the state of the currently probed process is declared. It was shown in \cite{Cohen_2014_Optimal} that the optimal probing strategy is an open-loop strategy that probes processes in a decreasing order of $\frac{\pi_k c_k}{\mathbf{E}(N_k)}$ (referred to as the OL-$\pi c N$ rule), where $\mathbf{E}(N_k)$ is the expected detection time for process $k$. With the restriction that the test of the currently chosen process has to be completed before testing other processes, it is perhaps not surprising that the optimal probing strategy is open-loop: the probing order is predetermined based on prior information $\{\pi_k, c_k, f_k^{(0)} ,f_k^{(1)}\}$, and $K$ uninterrupted sequential tests are carried out, one over each process.

In this paper we relax the restriction on switching across processes during the detection process. We are thus facing a full-blown dynamic problem where at any given time, the decision maker can choose any process whose state has not been declared and the optimal strategy hinges on fully utilizing the entire decision and observation history. In this case, the priority of each process in probing needs to be dynamically updated based on each newly obtained observation. In particular, the probability of each process being abnormal, a key factor in determining the probing order as shown in our prior work \cite{Cohen_2014_Optimal}, should be updated from the prior probability $\pi_k$ to the \emph{a posteriori} probability $\pi_k(n)$ at time $n$ based on all past observations from this process. Consequently, the expected detection time of process $k$ will also dynamically change based on the \emph{a posteriori} probability of being abnormal (see (\ref{eq:sample_size_approx})). Built upon the insights obtained in our prior work \cite{Cohen_2014_Optimal}, we thus propose the following closed-loop $\pi c N$ rule (referred to as CL-$\pi c N$). At each given time $n$, each process is associated with an index $\gamma_k(n)\triangleq\frac{\pi_k(n)c_k}{\mathbf{E}^{(n)}(N_k)}$, where $\pi_k(n)$ is the \emph{a posteriori} probability of process $k$ being abnormal (i.e., the belief) and $\mathbf{E}^{(n)}(N_k)$ is the expected detection time of process $k$ based on $\pi_k(n)$. At each time (except a sparse subsequence of time instants as detailed below), the process with the largest index is probed, and its state is detected via a sequential test using all past observations. The index of this process is also updated (based on the newly obtained observation) for comparison with other processes at the next time instant. To ensure that all processes are sufficiently probed so that the belief $\pi_k(n)$ (consequently the index $\gamma_k(n)$) is a sufficiently accurate indication of the process state, processes are probed in a round-robin fashion at a subsequence of time instants that grows exponentially sparse with time. In other words, a logarithmic order of time is used to explore the state of all processes to ensure the accuracy of the indices $\gamma_k(n)$ used in the remaining majority of time instants. The main technical result of this paper is the establishment of the asymptotic optimality of the CL-$\pi c N$ strategy for $M=1$ for both known and unknown observation models (i.e., whether $\{f_k^{(0)} ,f_k^{(1)}\}$ are known or has unknown parameters). When $M>1$, we show that CL-$\pi c N$ preserves its asymptotic optimality if processes incur the same cost when abnormal (i.e., $c_1=c_2=\cdots=c_K$).
It should be noted that the techniques used in proving the asymptotic optimality under the full-blown dynamic problem considered in this paper are fundamentally different from those used in \cite{Cohen_2014_Optimal} under the switching constraint. The proof for the optimality of the OL-$\pi c N$ policy under the restrictive model in \cite{Cohen_2014_Optimal} is mainly based on an interchange argument, which no longer holds in this fully dynamic problem. In proving the asymptotic optimality of the CL-$\pi c N$ rule under the general model, the key is to show that the average time spent on probing undesired processes (i.e., when noisy observations lead to an inaccurate indication of the process states) does not affect the asymptotic detection time. This is done in two steps. First, we establish the asymptotic lower bound on the total cost that can be achieved by any policy. Second, by upper bounding the tail of the distribution of some ancillary random times, we show that CL-$\pi c N$ achieves the lower bound in the asymptotic regime.

\subsection{Related Work}
\label{ssec:related}

Sequential hypothesis testing was pioneered by Wald in \cite{Wald_1947_Sequential} where he established the Sequential Probability Ratio Test (SPRT) for binary hypothesis testing. For simple hypothesis testing where the observation distributions are known, SPRT is optimal in terms of minimizing the expected sample size under given type $I$ and type $II$ error probability constraints.
Various extensions to M-ary hypothesis testing and testing composite hypotheses have been studied in \cite{Schwarz_1962_Asymptotic, Lai_1988_Nearly, Pavlov_1990_Sequential, Tartakovsky_2002_Efficient, Draglin_1999_Multihypothesis} for a single process. In these cases, asymptotically optimal performance can be obtained in terms of minimizing the expected sample size as the error probability approaches zero.

There are a number of recent studies on sequential detection involving multiple independent processes for various applications (see, for example, \cite{Zhao_2010_Quickest, Li_2009_Restless, Caromi_2013_Fast, Lai_2011_Quickest, Malloy_2012_Sequential, Malloy_2012_Quickest, Tajer_2013_Quick, Geng_2014_Quickest} and references therein). Differing from this work (and our prior work \cite{Cohen_2013_Optimal_GlobalSIP, Cohen_2014_Optimal}), these studies focus on minimizing the total detection delay, which does not translate to minimizing the total system-wide cost in the anomaly detection problem at hand.
The anomaly detection problem also shares similarities with the optimal search and target whereabouts problems as studied in \cite{Zigangirov_1966_Problem, Klimko_1975_Optimal, Dragalin_1996_Simple, Stone_1971_Optimal} under a sequential setting and in \cite{Tognetti_1968_An, Kadane_1971_Optimal, Zhai_2013_Dynamic, Castanon_1995_Optimal} under a fixed sample size setting. The design objectives in these studies again differ from that in this paper. The problem of universal outlier hypothesis testing involving a vector of observations containing coordinates with an outlier distribution was studied in \cite{Li_2013_Universal}.

The anomaly detection problem studied in this paper can be considered as a variation of the sequential design of experiments problem first studied by Chernoff \cite{Chernoff_1959_Sequential}. In this problem, a decision maker aims to infer the state of an underlying phenomenon by sequentially choosing the experiment (thus the observation model) to be conducted at each time among a set of available experiments. Classic and more recent studies of this problem  can be found in
\cite{Bessler_1960_Theory, Nitinawarat_2012_Controlled, Nitinawarat_2013_Controlled, Naghshvar_2013_Active, Naghshvar_2013_Sequentiality, Cohen_2014_Active, Cohen_2014_Quickest}. However, the objective of minimizing the total detection delay makes the problems considered in \cite{Chernoff_1959_Sequential, Bessler_1960_Theory, Nitinawarat_2012_Controlled, Nitinawarat_2013_Controlled, Naghshvar_2013_Active, Naghshvar_2013_Sequentiality, Cohen_2014_Active, Cohen_2014_Quickest} fundamentally different from the one considered in this paper.

\section{System Model and Problem Formulation}
\label{sec:network}

Consider a system consisting of $K$ processes, where each process may be in a normal state (denoted by $H_0$) or abnormal state (denoted by $H_1$). Each process $k$ is abnormal with a prior probability $\pi_k$, independent of other processes. Each abnormal process $k$ incurs a cost $c_k$ ($0\leq c_k<\infty$) per unit time until it is tested and identified. Processes in a normal state do not incur cost. At each given time, only $M$ processes can be probed. We first consider $M=1$. An extension to $M\geq 1$ is discussed in Section \ref{sec:multi}.

When process $k$ is probed at time $n$, a measurement $y_k(n)$ is drawn independently in a one-at-a-time manner. If process $k$ is in a normal state, $y_k(n)$ follows density $f_k^{(0)}$; if process $k$ is abnormal, $y_k(n)$ follows density $f_k^{(1)}$. In section \ref{sec:simple}, we examine the case where the densities $f_k^{(0)}$, $f_k^{(1)}$ are known. In Section \ref{sec:composite} we extend our results to the case where the densities have unknown parameters.

Let $\phi(n)\in\left\{1, 2, ..., K\right\}$ be a selection rule, indicating which process is chosen to be tested at time $n$. Let $\mathbf{y}(n)=\left\{\phi(t), y_{\phi(t)}(t)\right\}_{t=1}^n$ be the set of all the observations and actions up to time $n$. The selection rule $\phi(n)$ is a mapping from $\mathbf{y}(n-1)$ to $\left\{1, 2, ..., K\right\}$. The vector of selection rules over the time series is denoted by $\boldsymbol\phi=(\phi(1), \phi(2), ...)$. Let $\mathbf{1}_k(n)$ be the probing indicator function, where $\mathbf{1}_k(n)=1$ if process $k$ is probed at time $n$ and $\mathbf{1}_k(n)=0$ otherwise.

Let $\tau_k$ be a stopping time (or a stopping rule), which is the time (counted from the beginning of the entire detection process) when the decision maker stops taking observations from process $k$ and declares its state. The vector of stopping times for the $K$ processes is denoted by $\boldsymbol\tau=(\tau_1, ..., \tau_K)$. The random sample size required to make a decision regarding the state of process $k$ is denoted by $N_k$. Let $\delta_k\in\left\{0, 1\right\}$ be a decision rule, indicating the state declaration of process $k$ at time $\tau_k$. $\delta_k=0$ if the decision maker declares that process $k$ is in a normal state, and $\delta_k=1$ if the decision maker declares that process $k$ is in an abnormal state. The vector of decision rules for the $K$ processes is denoted by $\boldsymbol\delta=(\delta_1, ..., \delta_K)$. \vspace{0.2cm}

\begin{definition}
An admissible strategy $\mathbf{s}$ for the sequential anomaly detection problem is given by the
tuple $\mathbf{s}=\left(\boldsymbol{\tau}, \boldsymbol{\delta}, \boldsymbol{\phi}\right)$. \vspace{0.1cm}
\end{definition}

Let \vspace{0.2cm}
\begin{center}
$\bea{l}
\mathcal{H}_0\triangleq\left\{k: 1\leq k \leq K \; , \; \mbox{process $k$ is normal} \right\} \;, \vspace{0.3cm} \\
\mathcal{H}_1\triangleq\left\{k: 1\leq k \leq K \; , \; \mbox{process $k$ is abnormal} \right\} \;,
\ena$ \vspace{0.2cm}
\end{center}
be the sets of the normal and abnormal processes. The objective is to find a strategy $\mathbf{s}$ that minimizes the total expected cost incurred by all the abnormal processes subject to type $I$ (false-alarm) and type $II$ (miss-detection) error constraints for each process: \vspace{0.1cm}
\beq\label{eq:opt1}
\bea{lll}
\displaystyle\inf_{\mathbf{s}} & \E\left\{\displaystyle\sum_{k\in \mathcal{H}_1}{c_k\tau_k}\right\} & \vspace{0.4cm} \\
s.t.           &  P_k^{FA}\leq \alpha_k     & \forall k=1, ..., K, \vspace{0.3cm}\\
               &  P_k^{MD}\leq \beta_k      & \forall k=1, ..., K  \;, \vspace{0.2cm}
\ena
\eeq
where $P_k^{FA}, P_k^{MD}$ denote the false-alarm and miss-detect error probabilities for process $k$, respectively. We point out that the total expected cost defined in (\ref{eq:opt1}) does not include the cost incurred by miss-detected abnormal processes. Since the error constraints are typically required to be small, (\ref{eq:opt1}) well approximates the actual loss in practice.

\section{Anomaly Detection Under Known Observation Models}
\label{sec:simple}

In this section we derive an asymptotically optimal solution for the anomaly detection problem (\ref{eq:opt1}) under the case where the densities $f_k^{(0)}$, $f_k^{(1)}$ are known for all $k$. The proposed probing strategy has a simple closed-loop index form. The index of the currently probed process is updated based on the newly obtained measurement, and the process with the highest index is selected at each given time except a subsequence of time instants that grows exponentially sparse with time. In Section \ref{ssec:computing} we discuss the computation of the index in detail.

\subsection{The CL-$\pi c N$ policy:}
\label{ssec:policy}

In this section we present the CL-$\pi c N$ policy. Let
\beq
\label{eq:sum_LLR}
\displaystyle \ell_k(n)\triangleq\log \frac{f_k^{(1)}(y_k(n))}{f_k^{(0)}(y_k(n))} \;,
\eeq
and
\beq
\label{eq:sum_LLR}
\displaystyle S_k(n)\triangleq\sum_{t=1}^{n}{\ell_k(t)\mathbf{1}_k(t)}
\eeq
be the log-likelihood ratio (LLR) and the observed sum LLRs at time $n$ of process $k$, respectively. Let $\mathcal{K}(n)$ be the set of processes whose states have not been declared up to time $n$. Let $\pi_k(n)$ denote the posterior probability of process $k$ being abnormal at time $n$ (see (\ref{eq:beleif_ind}) for the update of the belief based on a newly obtained measurement). Let $\mathbf{E}^{(n)}(N_k)$ be the expected detection time for process $k$ at time $n$ which dynamically changes due to the changes in the belief $\pi_k(n)$ (see (\ref{eq:sample_size_approx})). Define
\beq
\gamma_k(n)\triangleq
\begin{cases}
\displaystyle\frac{\pi_k(n)c_k}{\mathbf{E}^{(n)}(N_k)} \;, \; \mbox{if} \; k\in\mathcal{K}(n) \;, \\
\displaystyle\hspace{0.6cm} 0 \hspace{0.8cm} \;, \;\mbox{otherwise} \;.
\end{cases}
\eeq
Let $\mathcal{N}_s=\left\{n_1, n_2, ...\right\}$ be a set of time instants that grows exponentially sparse with time (i.e., the cardinality of $\mathcal{N}_s$ grows at a logarithmic rate with time). The CL-$\pi c N$ policy selects the process with the highest index $\gamma_k(n)$ at all times except at time instants in $\mathcal{N}_s$. During the subsequence $\mathcal{N}_s$, all processes whose states have not been declared are probed in a round robin fashion. Specifically,
\beq
\label{eq:selection_simple}
\displaystyle\phi(n)=
\begin{cases}
\displaystyle\hspace{0.1cm}\arg\max_{k}\;\gamma_k(n) \;,\; \mbox{if\;} n\nin\mathcal{N}_s,
\vspace{0.2cm}\\\hspace{0.8cm}
\displaystyle \;r(n)\; \hspace{0.8cm},\; \mbox{if\;} n=n_i \;\forall\; i=2, 3, ... \;.
\end{cases}
\eeq
The function $r(n)$ is given by:
\beq
\bea{l}
\label{eq:round_robin}
\displaystyle r(n)
=\left[\left(\phi(n_{i-1})+u(n)\right)
\mbox{\;mod\;} K\right]+1 \;,
\ena
\eeq
where $u(n)=\min\left(0, 1, ..., K-1\right)$ s.t $r(n)\in\mathcal{K}(n)$, mod denotes the modulo operator, and $\phi(n_{1})=1$. Note that processes are no longer probed once their state has been declared. The round-robin probing subsequence $\mathcal{N}_s$ is to ensure all processes are sufficiently explored. We set\footnote{Note that duplicate values in $\mathcal{N}_s$ are removed.} $\mathcal{N}_s=\left\{\lceil \zeta^\ell\rceil\right\}_{\ell=1}^{\infty}$, where $\zeta>1$ is a design parameter (for details see Section \ref{ssec:computing}). We point out that this idea of introducing an exploration subsequence to ensure sufficient learning has also been used in \cite{Nitinawarat_2013_Controlled, Vakili_2013_Deterministic}.

Following the Wald's SPRT \cite{Wald_1947_Sequential}, $S_{\phi(n)}(n)$ is compared to boundary values $A_{\phi(n)}, B_{\phi(n)}$ as follows: \vspace{0.3cm}
\begin{itemize}
  \item If $S_{\phi(n)}(n)\in\left(A_{\phi(n)}, B_{\phi(n)}\right)$, then $\phi(n)\in\mathcal{K}(n+1)$ (i.e., continue to take observations from process $\phi(n)$ according to the selection rule (\ref{eq:selection_simple}) at time $n+1$). \vspace{0.3cm}
  \item If $S_{\phi(n)}(n)\geq B_k$, stop taking observations from process $k$ and declare it as abnormal (i.e., $\tau_{\phi(n)}=n$, $\delta_{\phi(n)}=1$ and $\phi(n)\nin\mathcal{K}(n')$ for all $n'>n$). \vspace{0.3cm}
  \item If $S_{\phi(n)}(n)\leq A_k$, stop taking observations from process $k$ and declare it as normal (i.e., $\tau_{\phi(n)}=n$, $\delta_{\phi(n)}=0$ and $\phi(n)\nin\mathcal{K}(n')$ for all $n'>n$). \vspace{0.3cm}
\end{itemize}

The boundary values $A_k$ and $B_k$ are determined such that the error constraints are satisfied. In general, the exact computation of the boundary values is very laborious under the finite regime. Nevertheless, Wald's approximation can be applied to simplify the computation \cite{Wald_1947_Sequential}:
\beq
\label{eq:boundary_approx}
\bea{l}
A_k\approx \displaystyle\log\left(\frac{\beta_k}{1-\alpha_k}\right)\;, \vspace{0.3cm} \\
B_k\approx \displaystyle\log\left(\frac{1-\beta_k}{\alpha_k}\right)\;.
\ena
\eeq
Wald's approximation performs well for small $\alpha_k, \beta_k$ and is asymptotically optimal as the error probability approaches zero. Since type $I$ and type $II$ errors are typically required to be small, Wald's approximation is widely used in practice \cite{Wald_1947_Sequential}.

Note that CL-$\pi c N$ is a closed-loop strategy, where the index $\gamma_k(n)$ is updated at each given time based on past observations and actions and the next process is selected accordingly. It can be seen that CL-$\pi c N$ handles the well-known trade-off between exploration and exploitation. The decision maker spends a logarithmic order of time by selecting the processes in a round-robing manner to explore their states and guard against miss-detected abnormal processes. On the other hand, at times $n\nin\mathcal{N}_s$, it exploits the information gathered so far to select the process according to the updated index $\gamma_k(n)$ at time $n$. The index form under the CL-$\pi c N$ policy which dynamically updates the priority of the processes is intuitively satisfying. We should prioritize processes that incur higher costs to the system when abnormal. Furthermore, the priority of a process should be increased as the updated belief of it being abnormal increases during the detection process. It is also desirable to place processes that require longer testing time toward the end of the testing process since their detection time contributes to the cost of every abnormal process that has not been identified. Thus, the priority of a process increases as the updated expected detection time decreases. Note that the sequential test uses an SPRT-based method with memory to minimize the expected sample size for every process. When switching back to a previously visited process (say $k$) at time $n$, the sequential test uses the sum LLRs $S_k(n)$ in decision making to exploit all past observations obtained during previous visits.

\subsection{Performance Analysis}
\label{ssec:performance}

In this section we analyze the performance of the CL-$\pi c N$ policy. Let
\beq
P_e^{max}\triangleq\max\left(\alpha_1, \beta_1, ..., \alpha_K, \beta_K\right) \;.
\eeq
The following theorem shows that CL-$\pi c N$ is asymptotically optimal in terms of minimizing the expected cost as the error probability approaches zero. When deriving asymptotic we assume regularity conditions on the error constraints, as discussed in App. \ref{app}.
 \vspace{0.2cm}
\begin{theorem}
\label{th:asymptotic_optimality_simple}
Let $\E(C^*), \E(C(\mathbf{s}))$ be the expected costs under CL-$\pi c N$ and any other policy $\mathbf{s}$, respectively. Then\footnote{The notation $g\sim f$ as $P_e^{max}\rightarrow 0$ implies $\displaystyle\lim_{P_e^{max}\rightarrow 0}g/f=1$},
\beq
\bea{l}
\displaystyle \E(C^*)\;\sim\;\inf_{\mathbf{s}}\;\E(C(\mathbf{s})) \;\;\mbox{as}\;\; P_e^{max}\rightarrow 0 \;. \vspace{0.2cm}
\ena
\eeq
\end{theorem}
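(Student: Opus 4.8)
The plan is to prove Theorem~\ref{th:asymptotic_optimality_simple} by sandwiching the expected cost between matching asymptotic bounds, following the two-step program sketched in Section~\ref{sec:main_results}: an asymptotic lower bound valid for \emph{every} admissible strategy, and a matching upper bound for CL-$\pi c N$. Since $\inf_{\mathbf{s}}\E(C(\mathbf{s}))\le\E(C^{*})$ holds trivially, it suffices to exhibit a common leading term $L$ with $\inf_{\mathbf{s}}\E(C(\mathbf{s}))\ge L\,(1+o(1))$ and $\E(C^{*})\le L\,(1+o(1))$. Write $D_k^{(1)}\triangleq D(f_k^{(1)}\|f_k^{(0)})$ and $D_k^{(0)}\triangleq D(f_k^{(0)}\|f_k^{(1)})$ for the two Kullback--Leibler divergences, i.e.\ the asymptotic drifts of the running sum $S_k(n)$ under $H_1$ and $H_0$; all asymptotics are taken as $P_e^{max}\to 0$ under the regularity conditions of App.~\ref{app}.

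\emph{Lower bound.} I would first write the objective as $\E(C(\mathbf{s}))=\sum_{k=1}^{K}c_k\,\pi_k\,\E[\tau_k\mid k\in\mathcal{H}_1]$, so that only the conditional completion times enter. Because $M=1$, each $\tau_k$ is the number of slots elapsed before process $k$ is declared, i.e.\ the sum of the sample counts of $k$ and of every process completed ahead of it. Two ingredients then furnish $L$. First, the reliability constraints force, via Wald's identity and the information inequality for the binary SPRT, that the expected number of observations spent on process $k$ is at least $|\log\alpha_k|/D_k^{(1)}$ when $k$ is abnormal and $|\log\beta_k|/D_k^{(0)}$ when normal, up to a $(1+o(1))$ factor; no ordering or switching can evade these. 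Second, for the \emph{ordering} I would grant any policy the benefit of knowing the true states---which can only decrease the cost and so keeps the bound valid---reducing the problem on each realization to a single-machine weighted-completion-time schedule with weights $c_k$ (nonzero only on abnormal processes) and processing times equal to the minimal sample sizes above. Smith's rule, equivalently the interchange argument used for OL-$\pi c N$ in \cite{Cohen_2014_Optimal}, identifies its optimum; taking the expectation over the prior states defines the leading term $L$.

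\emph{Upper bound.} Here I would show CL-$\pi c N$ attains $L$. Partition the horizon $T=O(\log(1/P_e^{max}))$ into the exploration slots $n\in\mathcal{N}_s$, the \emph{good} exploitation slots on which $\gamma_k(n)$ ranks the undeclared processes as in the state-aware Smith schedule underlying $L$, and the \emph{bad} exploitation slots on which a noisy belief reverses that ranking. The exploration slots number $|\mathcal{N}_s\cap[1,T]|=O(\log_\zeta T)=O(\log\log(1/P_e^{max}))=o(T)$ and are negligible. On the good slots, once a belief has concentrated near $0$ or $1$ the index satisfies $\gamma_k(n)\approx c_k/\mathbf{E}^{(n)}(N_k)$ on believed-abnormal processes and $\gamma_k(n)\approx 0$ on believed-normal ones, so CL-$\pi c N$ probes in the per-realization optimal order. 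The crux is the bad slots: I would bound their expected number by controlling the tail of the ancillary random times at which each process's belief first becomes, and then stays, correctly informative. Since $S_k(n)$ has strictly positive drift toward the true hypothesis, exponential large-deviation bounds on $S_k$ give a geometrically decaying probability that the belief is still misleading after $m$ private samples of process $k$; summing the tail yields an expected bad-slot count of $O(1)=o(T)$. The three estimates combine to give $\E(C^{*})\le L\,(1+o(1))$, which together with the lower bound establishes $\E(C^{*})\sim\inf_{\mathbf{s}}\E(C(\mathbf{s}))$.

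\emph{Main obstacle.} The delicate step is the tail control of the bad-slot count. Unlike a stand-alone SPRT, the increments $\ell_k(t)$ of a fixed process accrue only at the non-consecutive, data-dependent slots at which that process is selected, so $\pi_k(n)$ is not a function of a fixed number of independent observations and the $K$ processes are coupled through both the single-probe constraint and the shared index comparison. I would handle this by re-indexing each process on its own private sample clock and applying Wald-type stopping-time arguments on that clock, while invoking the guaranteed round-robin samples supplied by $\mathcal{N}_s$---which accumulate at rate $\Omega(\log n)$---to certify that every still-undeclared process gathers enough observations for its belief to concentrate. This decouples the per-process tail estimate from the probing history of the remaining processes and closes the argument.
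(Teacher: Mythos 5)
Your overall architecture is the same as the paper's: a lower bound valid for every admissible policy obtained by combining per-process sample-size bounds with a Smith's-rule scheduling argument, and a matching upper bound for CL-$\pi c N$ obtained by showing that the random time after which the indices rank the processes in the Smith-optimal order has a tail light enough (thanks to the $\Omega(\log n)$ round-robin samples and the positive drift of $S_k$) that the cost incurred before that time is negligible. However, there is a concrete gap in your lower bound. You invoke Wald's identity to get $\E[N_k]\geq |\log\alpha_k|/D(f_k^{(1)}\|f_k^{(0)})\,(1-o(1))$ and then feed these \emph{expected} sample sizes into a ``per-realization'' weighted-completion-time schedule. But the quantity you must lower-bound is $\E\bigl[\min_\sigma \sum_i c_{\sigma(i)}\sum_{k\leq i}N_{\sigma(k)}\bigr]$, and the Smith-optimal cost is a minimum of linear functions of the realized $(N_1,\dots,N_{K_1})$, hence concave; Jensen therefore gives $\E[\min_\sigma(\cdot)]\leq \min_\sigma(\cdot)$ evaluated at $\E[N_k]$, which is the wrong direction. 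Knowing only that each $\E[N_k]$ is large does not preclude a policy whose realized sample sizes are sometimes tiny in exactly the coordinates that dominate the completion-time sum. The paper closes this hole by using a \emph{probabilistic} lower bound (Lemma 2.1 of Tartakovsky, 1998): $\Pr\bigl(N_k>(1-\epsilon)B_k/D(f_k^{(1)}\|f_k^{(0)})\ \forall k\bigr)\to 1$ uniformly over admissible policies, truncating every observation vector to the deterministic length $\tilde N_k=(1-\epsilon)B_k/D(f_k^{(1)}\|f_k^{(0)})$ on that event, and only then applying Smith's rule to these deterministic processing times. Your argument needs this strengthening to be valid.

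Two smaller points. First, your claim that the expected number of ``bad'' exploitation slots is $O(1)$ hinges on the polynomial tail exponent exceeding one: the geometric decay in the number of \emph{private} samples becomes only $n^{-c/(K\log\zeta)}$ in wall-clock time, so summability requires $\zeta$ sufficiently close to $1$ --- this is exactly the role of the design parameter in the paper's Lemmas on $T_1(\epsilon)$ and $T_1$, and should be stated as a hypothesis rather than left implicit. Second, the paper's upper bound also separately accounts for the $O(\log B_1)$ observations consumed by the round-robin subsequence itself (they delay every undeclared process), which your exploration-slot count covers but which you should fold explicitly into the cost decomposition alongside the $K\bar c\,T_1$ term.
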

\begin{proof}
See Appendix \ref{app:asymptotic_optimality_simple}. \vspace{0.2cm}
\vspace{0.2cm}
\end{proof}

\subsection{Implementation}
\label{ssec:computing}

In this section we discuss the implementation of the proposed policy. At each time $n$, the decision maker updates the indices and the sum LLRs for the currently probed processes, and also sorts the indices for selecting the next process. Sorting the indices can be done by $O(K \log K)$ time via a sorting algorithm. Updating the indices and the sum LLRs (for the general case where $M$ processes are probed at a time) requires $O(M)$ time.

We now consider the computation of the index $\gamma_k(n)=\pi_k(n)c_k/\mathbf{E}^{(n)}(N_k)$. The posterior probability of process $k$ being abnormal can be updated at time $n+1$ based on the Bayes rule:
\beq
\label{eq:beleif_ind}
\bea{l}
\pi_k(n+1)=\displaystyle\left(1-\mathbf{1}_k(n)\right)\pi_k(n) \vspace{0.2cm}\\ \hspace{0.5cm}
+\displaystyle\frac{\mathbf{1}_k(n)\pi_k(n)f_k^{(1)}(y_k(n))}
                {\pi_k(n)f_k^{(1)}(y_k(n))+\left(1-\pi_k(n)\right)f_k^{(0)}(y_k(n))}\;.
\ena
\eeq
Note that at time $n+1$, only the index of the process that was probed at time $n$ needs to be updated. The expected sample size $\mathbf{E}^{(n)}(N_k)$ at time $n$ depends on the currently belief value:
\beq
\label{eq:sample_size_approx}
\bea{l}
\mathbf{E}^{(n)}(N_k)
= \pi_k(n)\mathbf{E}(N_k|H_1)+(1-\pi_k(n))\mathbf{E}(N_k|H_0) \;,
\ena
\eeq
where $\mathbf{E}(N_k|H_i)$ is the expected detection time for process $k$ conditioned on its state $H_i$. In general, it is difficult to obtain a closed-form expression for $\mathbf{E}^{(n)}(N_k|H_i)$ under the finite regime. However, Wald's approximation can be applied to simplify the computation \cite{Wald_1947_Sequential}:
\beq
\label{eq:sample_size_approx_H}
\bea{l}
\mathbf{E}(N_k|H_0)\approx \displaystyle\frac{\left(1-\alpha_k\right)\log\frac{1-\alpha_k}{\beta_k} - \alpha_k \log\frac{1-\beta_k}{\alpha_k}}{D(f_k^{(0)}||f_k^{(1)})} \;, \vspace{0.2cm} \\
\mathbf{E}(N_k|H_1)\approx \displaystyle\frac{\left(1-\beta_k\right)\log\frac{1-\beta_k}{\alpha_k} - \beta_k \log\frac{1-\alpha_k}{\beta_k}}{D(f_k^{(1)}||f_k^{(0)})} \;,
\ena
\eeq
where $D(f_k^{(i)}||f_k^{(j)})=\mathbf{E}_i\left(\log\frac{f_k^{(i)}(y_k(1))}{f_k^{(j)}(y_k(1))}\right)$ denotes the Kullback-Leibler (KL) divergence between the hypotheses $H_i$ and $H_j$. This approximation approaches the exact expected sample size for small $\alpha_k, \beta_k$. We point out that asymptotic optimality of the probing strategy is preserved as long as the required \emph{order} of the indices is preserved. Therefore, computing the exact expected remaining detection time of a process during a sequential test is not required. Using the Wald's approximation to the entire detection time when computing the indices at each given time is sufficient for obtaining asymptotic optimality.

Next, we discuss the design parameter $\zeta>1$ used in the exploration subsequence $\mathcal{N}_s$. Note that as $\zeta$ approaches $1$, the round-robin selection rule is executed more frequently. It is shown in App. \ref{app} that asymptotic optimality of CL-$\pi c N$ holds when $\zeta$ is set sufficiently close to $1$ to ensure that the round-robin probing gathers sufficient information so that the index $\gamma_k(n)$ is a sufficiently accurate indication of the process state. In the finite regime, however, $\zeta$ must be designed judiciously for better performance. Intuitively speaking, one should increase $\zeta$ as the sample sizes required to declare the process states decrease to reduce the time spent during the round-robin selection rule. For instance, consider the extreme case where only a single observation is required to declare the process states (i.e., the KL divergences between the observation distributions are sufficiently large). Therefore, switching between processes is done only when the state of the currently probed process is declared. In this extreme case, the optimal probing strategy is to test the processes in decreasing order of $\pi_k c_k$. Hence, it is desirable to set $\zeta$ sufficiently high in that case so that only the first line in (\ref{eq:selection_simple}) will be executed to obtain optimal performance.

\section{Anomaly Detection Under Unknown Observation Models}
\label{sec:composite}

In the previous section we focused on the case where the densities under both hypotheses are known. For that case, the sum LLRs was used by every process to design stopping and decision rules based on Wald's SPRT which minimizes the expected sample size for detection. In this section we consider the case where the densities have unknown parameters. While the SPRT applies to the latter case as well with minor modifications, it is highly sub-optimal in general. Therefore, in what follows we focus on asymptotically optimal tests in terms of minimizing the sample size as the error probability approaches zero.

Let $\theta_k$ be an unknown parameter (or a vector of unknown parameters) of process $k$. The observations $\left\{y_k(i)\right\}_{i\geq1}$ are drawn from a common density $f_k\left(y|\theta_k\right)$, $\theta_k\in\Theta_k$, where $\Theta_k$ is the parameter space of process $k$. If process $k$ is in a normal state, then $\theta_k\in\Theta_k^{(0)}$; if process $k$ is in an abnormal state, then $\theta_k\in(\Theta_k\backslash\Theta_k^{(0)})$. Let $\Theta_k^{(0)}$, $\Theta_k^{(1)}$ be disjoint subsets of $\Theta_k$, where $I_k=\Theta_k\backslash(\Theta_k^{(0)}\cup\Theta_k^{(1)})\neq\emptyset$ is an indifference region\footnote{The assumption of an indifference region is widely used in the theory of sequential hypothesis testing to derive asymptotically optimal performance. Nevertheless, in some cases this assumption can be removed. For more details, the reader is referred to \cite{Lai_1988_Nearly}.}.
When $\theta_k\in I_k$, the detector is indifferent regarding the state of process $k$. Hence, there are no constraints on the error probabilities for all $\theta_k\in I_k$. The hypothesis test regarding process $k$ is to test $\theta_k\in\Theta_k^{(0)}$ \; against \; $\theta_k\in\Theta_k^{(1)}$. Reducing $I_k$ increases the sample size.

Asymptotically optimal sequential tests for a single process have been widely studied in the literature, where the key idea is to use the maximum likelihood estimate (MLE) of the unknown parameters to perform a one-sided sequential test to reject $H_0$ and a one-sided sequential test to reject $H_1$. It is assumed that regularity conditions on the distribution hold to guarantee consistency of the MLE \cite{Kay_1993_Fundamentals}. One way to perform the sequential test is to use the Generalized Likelihood Ratio (GLR) statistics. Let $\mathbf{y}_k(n)=(y_k(1), ..., y_k(n))$ be the vector of observations for process $k$ by time $n$. For $i,j \in\left\{0, 1\right\}$ and $i\neq j$, let
\beq
\label{eq:GLR}
S_k^{(i), GLR}(n)=\displaystyle\sum_{r=1}^{n}\log\frac{f_k(y_k(r)|\hat{\theta}_k(n))}
                        {f_k(y_k(r)|\hat{\theta}_k^{(j)}(n))}
\eeq
be the GLR statistics used to declare hypothesis $H_i$ (i.e., reject hypothesis $H_j$) at stage $n$, where $\hat{\theta}_k(n)
=\arg\max_{\theta_k\in\Theta_k}{f_k\left(\mathbf{y}_k(n)|\theta_k\right)}$ and $\hat{\theta}_k^{(j)}(n)=\arg\max_{\theta_k\in\Theta_k^{(j)}}{f_k\left(\mathbf{y}_k(n)|\theta_k\right)}$
are the Maximum-Likelihood (ML) estimates of the parameters over the parameter spaces $\Theta_k$ and $\Theta_k^{(j)}$ at stage $n$, respectively.  \\
Another way is to use the Adaptive Likelihood Ratio (ALR) statistics. For $i,j \in\left\{0, 1\right\}$ and $i\neq j$, let
\beq
\label{eq:ALR}
S_k^{(i), ALR}(n)=\displaystyle\sum_{r=1}^{n}\log\frac{f_k(y_k(r)|\hat{\theta}_k(r-1))}
                         {f_k(y_k(r)|\hat{\theta}_k^{(j)}(n))}
\eeq
be the ALR statistics used to declare hypothesis $H_i$ at stage $n$.
Let $S_k^{(i)}(n)$ be the chosen statistics and let
\beq\label{eq:ALR_stopping}
\bea{l}
N_k^{(i)}=\displaystyle\inf\left\{ \; n \; :  S_k^{(i)}(n) \geq B_k^{(i)}
\right\}
\ena
\eeq
be the stopping rule used to declare hypothesis $H_i$, where $B_k^{(i)}$ is the boundary value.
For each process $k$, the decision maker stops the sampling when $N_k=\min\left\{N_k^{(0)}, N_k^{(1)}\right\}$. If $N_k=N_k^{(0)}$, process $k$ is declared as normal. If $N_k=N_k^{(1)}$, process $k$ is declared as abnormal.
The advantage of using the ALR statistics is that setting $B_k^{(0)}=\log\frac{1}{\alpha_k}$, $B_k^{(1)}=\log\frac{1}{\beta_k}$ satisfies the error probability constraints in (\ref{eq:opt1}). However, such a simple setting cannot be applied when using the GLR statistics. Thus, implementing sequential tests using the ALR statistics is much simpler than using the GLR statistics. The disadvantage of using the ALR statistics is that poor early estimates (from a small number of observations) can
never be revised even after a large number of observations have been collected. For more details on sequential tests involving densities with unknown parameters, the reader is referred to \cite{Schwarz_1962_Asymptotic, Lai_1988_Nearly, Pavlov_1990_Sequential, Tartakovsky_2002_Efficient}.

\subsection{The CL-$\pi c N$ Policy}
\label{ssec:modifying}

With some modifications, the CL-$\pi c N$ policy proposed in Sec. \ref{sec:simple} can be applied to the case with unknown observation models. Let $S_k^{(i)}(n)$ be the GLR (\ref{eq:GLR}) or ALR (\ref{eq:ALR}) statistics used in the test. Define
\beq
\hat{\gamma}_k(n)\triangleq
\begin{cases}
\displaystyle\frac{\hat{\pi}_k(n)c_k}{\hat{\mathbf{E}}^{(n)}(N_k)} \;, \; \mbox{if} \; k\in\mathcal{K}(n) \;, \\
\displaystyle\hspace{0.6cm} 0 \hspace{0.8cm} \;, \;\mbox{otherwise} \;,
\end{cases}
\eeq
where $\hat{\pi}_k(n)$ denotes the estimated posterior probability of process $k$ being abnormal and $\hat{\mathbf{E}}^{(n)}(N_k)$ the updated expected detection time for process $k$ at time $n$ (see Sec. \ref{ssec:computing_composite} for the computation of the index). Similar to (\ref{eq:selection_simple}), the selection rule is given by:
\beq
\label{eq:selection_composite}
\displaystyle\phi(n)=
\begin{cases}
\displaystyle\hspace{0.1cm}\arg\max_{k}\;\hat{\gamma}_k(n) \;,\; \mbox{if\;} n\nin\mathcal{N}_s, \vspace{0.2cm}\\\hspace{0.9cm}
\displaystyle r(n) \hspace{0.9cm},\; \mbox{if\;} n=n_i \;\forall\; i=2, 3, ... \;,
\end{cases}
\eeq
where $r(n)$ is given in (\ref{eq:round_robin}) and $\phi(n_{1})=1$.
Then, $S_{\phi(n)}^{(i)}(n)$ is compared to boundary values $B^{(0)}_{\phi(n)}, B^{(1)}_{\phi(n)}$ as follows: \vspace{0.3cm}
\begin{itemize}
  \item If $S_k^{(0)}(n)<B_k^{(0)}$ and $S_k^{(1)}(n)<B_k^{(1)}$, then $\phi(n)\in\mathcal{K}(n+1)$ (i.e., continue to take observations from process $\phi(n)$ according to the selection rule (\ref{eq:selection_composite}) at time $n+1$).  \vspace{0.3cm}
  \item If $S_k^{(1)}(n)\geq B_k^{(1)}$, stop taking observations from process $k$ and declare it as abnormal (i.e., $\tau_{\phi(n)}=n$, $\delta_{\phi(n)}=1$ and $\phi(n)\nin\mathcal{K}(n')$ for all $n'>n$). \vspace{0.3cm}
  \item If $S_k^{(0)}(n)\geq B_k^{(0)}$, stop taking observations from process $k$ and declare it as normal (i.e., $\tau_{\phi(n)}=n$, $\delta_{\phi(n)}=0$ and $\phi(n)\nin\mathcal{K}(n')$ for all $n'>n$). \vspace{0.2cm}
\end{itemize}

\subsection{Performance Analysis}
\label{ssec:performance_composite}
The following theorem shows that the proposed policy is asymptotically optimal in terms of minimizing the expected cost as the error probability approaches zero. For purposes of analysis we consider the model in \cite{Chernoff_1959_Sequential}, where $\theta_k$ can take only a finite number of values. \vspace{0.2cm}
\begin{theorem}
\label{th:asymptotic_optimality_composite}
Let $\E(C^*), \E(C(\mathbf{s}))$ be the expected costs under CL-$\pi c N$ and any other policy $\mathbf{s}$, respectively. Then, \vspace{0.2cm}
\beq
\bea{l}
\displaystyle \E(C^*)\;\sim\;\inf_{\mathbf{s}}\;\E(C(\mathbf{s})) \;\;\mbox{as}\;\; P_e^{max}\rightarrow 0 \;. \vspace{0.2cm}
\ena
\eeq
\end{theorem}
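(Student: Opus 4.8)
The plan is to mirror the two-step structure used to prove Theorem \ref{th:asymptotic_optimality_simple}: first establish an asymptotic lower bound on $\inf_{\mathbf{s}}\E(C(\mathbf{s}))$ that holds for every admissible policy, and then show that CL-$\pi c N$ equipped with the GLR/ALR statistics attains this bound up to a $1+o(1)$ factor. Since CL-$\pi c N$ is itself an admissible policy, the inequality $\E(C^*)\geq\inf_{\mathbf{s}}\E(C(\mathbf{s}))$ holds trivially, so it suffices to prove a matching upper bound on $\E(C^*)$; the two inequalities then pin $\E(C^*)$ to $\inf_{\mathbf{s}}\E(C(\mathbf{s}))$ in the asymptotic regime.

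For the lower bound I would exploit the restriction to Chernoff's finite-parameter model. With each $\theta_k$ ranging over a finite set, the least-favorable composite test for a single process is governed by the minimum Kullback--Leibler divergence $D_k$ over the relevant parameter subspaces, so that any test meeting the error constraints $(\alpha_k,\beta_k)$ requires expected sample size at least of order $|\log P_e^{max}|/D_k$. The rest of the lower-bound machinery is structurally identical to the known-model case: aggregating these per-process sample-size bounds and invoking the same ordering argument (processes resolved in an order governed by the limiting indices $\pi_k c_k/\E(N_k)$) yields the asymptotic lower bound on the total expected cost, with only the definition of the per-process divergence changed from the known-model KL divergence to the composite $D_k$.

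The achievability step is where the composite case genuinely departs from Theorem \ref{th:asymptotic_optimality_simple}. First I would invoke the established asymptotic optimality of the single-process GLR/ALR tests (e.g., \cite{Chernoff_1959_Sequential, Lai_1988_Nearly, Schwarz_1962_Asymptotic}) to conclude that, conditioned on a process being probed to completion, its expected sample size matches $|\log P_e^{max}|/D_k$ to leading order. Next I would combine consistency of the MLE with the exponentially sparse round-robin exploration on $\mathcal{N}_s$ to show that the estimated index $\hat{\gamma}_k(n)$ converges to its correct limiting value: the logarithmically growing number of forced samples on every undeclared process guarantees that no abnormal process is starved of observations, so that $\hat{\pi}_k(n)$ and $\hat{\mathbf{E}}^{(n)}(N_k)$ eventually order the processes correctly. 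Finally, following the known-model proof, I would bound the tail of the ancillary random times at which the estimated ordering disagrees with the optimal ordering, showing that the expected time CL-$\pi c N$ wastes on undesired processes is $o(|\log P_e^{max}|)$ and hence negligible against the leading $|\log P_e^{max}|/D_k$ terms.

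The hard part will be controlling these tail probabilities when the decision statistics are no longer random walks with known drift. For the GLR statistic the increments depend on the running MLE $\hat{\theta}_k(n)$, and for the ALR statistic on the adaptively plugged-in $\hat{\theta}_k(r-1)$; in both cases the increments are neither i.i.d.\ nor of constant drift, so the clean SPRT tail estimates used for Theorem \ref{th:asymptotic_optimality_simple} do not transfer directly. I expect to replace them with large-deviation and uniform-concentration bounds for the MLE over the finite parameter set, and to verify that consistency survives the interruptions inherent in switching: because each $S_k^{(i)}(n)$ accumulates over all past visits to process $k$ rather than restarting, the effective sample size for each process still grows without bound, and the quantified MLE consistency continues to hold. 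Establishing these uniform tail bounds, and confirming they are strong enough to render the misordering times asymptotically negligible, is the crux of the argument.
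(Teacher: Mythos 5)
Your proposal follows essentially the same route as the paper: an asymptotic lower bound built on the composite KL divergences $D_k(\theta_k\|\Theta_k^{(i)})$, followed by achievability via MLE consistency, the logarithmic round-robin exploration, and polynomial tail bounds on the ancillary times at which the estimated index ordering can be wrong. The ``crux'' you flag --- that the GLR/ALR increments are not i.i.d.\ with known drift --- is resolved in the paper exactly as you anticipate for the finite parameter set: one defines $T_{ML}$ as the (non-stopping) time after which $\hat{\theta}_k(n)=\theta_k$ for all $k$, shows $\Pr(T_{ML}>n)\leq O(n^{-\nu})$ by Chernoff's argument plus the forced $\log n/(K\log\zeta)$ samples per process, and conditions on $\{T_{ML}\leq n\}$ so that the statistics revert to i.i.d.\ sums with positive drift and the known-model analysis applies.
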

\begin{proof}
See Appendix \ref{app:asymptotic_optimality_composite}. \vspace{0.2cm}
\vspace{0.2cm}
\end{proof}

\subsection{Implementation}
\label{ssec:computing_composite}

In this section we discuss the implementation of the proposed policy when the densities have unknown parameters. At each time $n$, the decision maker updates the indices and the GLR/ALR statistics for the currently probed processes (i.e., $M$ processes in general), and also sorts the indices for selecting the next process. Sorting the indices can be done by $O(K \log K)$ time via a sorting algorithm. Note that when the densities have unknown parameters, the updated belief must be computed with respect to the current MLE. In cases where the unknown parameters can take a small number $L$ of values, the decision maker can update and store the beliefs for the $L$ values. Thus, $O(LM)$ time is required instead of $O(M)$. However, if the support has infinite values, then the index must be computed at each time $n$ using the past $n$ observations, which generally requires $O(Mn)$ time (unless a quantization on the support is applied).
In general, the estimated belief of process $k$ can be updated at time $n+1$ as follows:
\beq
\label{eq:beleif_ind_composite}
\bea{l}
\hat{\pi}_k(n+1)=\displaystyle\left(1-\mathbf{1}_k(n)\right)\hat{\pi}_k(n) \vspace{0.2cm}\\ \hspace{0.5cm}
+\displaystyle\frac{\mathbf{1}_k(n)\hat{\pi}_k(n)\hat{f}_k^{(1)}(y_k(n))}
                {\hat{\pi}_k(n)\hat{f}_k^{(1)}(y_k(n))
                +\left(1-\hat{\pi}_k(n)\right)\hat{f}_k^{(0)}(y_k(n))}\;,
\ena
\eeq
where $\hat{\pi}_k(1)=\pi_k(1)$ and $\hat{f}_k^{(1)}(y_k(r))\triangleq f_k(y_k(r)|\hat{\theta}_k^{(1)}(n))$, $\hat{f}_k^{(0)}(y_k(r))\triangleq f_k(y_k(r)|\hat{\theta}_k^{(0)}(n))$ for all $1\leq r\leq n$. Note that computing $\hat{\pi}_k(n+1)$ at time $n+1$ requires $n$ computations with the current ML estimate of the parameter.

In general, it is difficult to obtain a closed-form expression for $\hat{\mathbf{E}}^{(n)}(N_k)$ under the finite regime. However, we can use the asymptotic property of the sequential tests to obtain a closed-form approximation to $\hat{\mathbf{E}}^{(n)}(N_k)$ based on the ML estimate of the parameter, which approaches the exact expected sample size as the error probability approaches zero. Let $D_k(\hat{\theta}_k(n)||\theta)\triangleq\mathbf{E}_{\hat{\theta}_k(n)}\left(\log\frac{f_k(y_k(n)|\hat{\theta}_k(n))}{f_k(y_k(n)|\theta)}\right)$ be the KL divergence between $f_k(y_k(n)|\hat{\theta}_k(n))$ and $f_k(y_k(n)|\theta)$, where the expectation is taken with respect to $f_k(y_k(n)|\hat{\theta}_k(n))$ and let $D_k(\hat{\theta}_k(n)||\Theta_k^{(i)})=\inf_{\theta\in\Theta_k^{(i)}}D_k(\hat{\theta}_k(n)||\theta)$. Then, the estimated expected sample size required to make a decision regarding the state of process $k$ is given by:
\beq
\bea{l}
\displaystyle\hat{\mathbf{E}}^{(n)}(N_k) \vspace{0.3cm}
=
\begin{cases}
\displaystyle\frac{B_k^{(0)}}{D_k\left(\hat{\theta}_k(n)||\Theta_k^{(1)}\right)} \;\;, \;\; \mbox{if\;\;} \hat{\theta}_k(n))\in\Theta_k^{(0)} \;, \\
\displaystyle\frac{B_k^{(1)}}{D_k\left(\hat{\theta}_k(n)||\Theta_k^{(0)}\right)} \;\;, \;\; \mbox{if\;\;} \hat{\theta}_k(n))\in\Theta_k^{(1)} \;,
\end{cases}
\ena
\eeq
which is guaranteed to be the asymptotic sample size under various families of distributions with unknown parameters (e.g., exponential, multi-variate distributions and general distributions when the unknown parameters can take a finite number of values) as the error probabilities approach zero~\cite{Lai_1988_Nearly, Pavlov_1990_Sequential, Tartakovsky_2002_Efficient, Chernoff_1959_Sequential, Nitinawarat_2012_Controlled}.\\

It should be noted that implementing the open-loop policy OL-$\pi c N$ \cite{Cohen_2014_Optimal} when the densities have unknown parameters requires a priori knowledge of the parameter's distribution (since the testing order is predetermined and switching between processes is allowed only when the state of the currently probed process is declared). However, under CL-$\pi c N$, the testing order is updated dynamically depending on all past observations and actions. As a result, estimating the detection time at time $n$ does not require a priori knowledge of $\theta_k$ since $\hat{\theta}_k(n)$ converges to its true value.

\section{Extension to Multi-Process Probing}
\label{sec:multi}

In this section we extend the results reported in the previous sections to the case where more than one process can be probed simultaneously (i.e., $M\geq 1$). For the ease of presentation, we will focus on the case where the observation models are known. However, the results apply to the case where the densities have unknown parameters.

Let $\boldsymbol{\sigma}(n)=(\sigma_1(n), ..., \sigma_K(n))$ be a permutation of $\left\{1, ..., K\right\}$ at time $n$ such that:
\beq
\gamma_{\sigma_1(n)}(n)\geq\gamma_{\sigma_2(n)}(n)\geq\cdots\geq\gamma_{\sigma_K(n)}(n)\;.
\eeq
The CL-$\pi c N$ policy selects the processes with the $M$ highest indices at all times except times $\mathcal{N}_s$ at which processes are probed in a round-robin manner, i.e.,
\beq
\label{eq:selection_simple_multiple}
\displaystyle\phi(n)=
\begin{cases}
\displaystyle\hspace{0.1cm} (\sigma_1(n), ..., \sigma_M(n)) \;,\; \mbox{if\;} n\nin\mathcal{N}_s,
\vspace{0.2cm}\\
\displaystyle\hspace{0.1cm} (r_1(n), ..., r_M(n)) \;\;,\; \mbox{if\;} n=n_i \;\forall\; i=2, 3, ... \;.
\end{cases}
\eeq
The functions $(r_1(n), ..., r_M(n))$ select the processes whose states have not been declared by time $n$ in a around-robin manner and are given recursively by:
\beq
\bea{l}
\label{eq:round_robin_multi}
\displaystyle r_1(n)
=\left[\left(r_M(n_{i-1})+u_1(n)\right) \hspace{0.0cm}
\mbox{\;mod\;} K\right]+1\;, \vspace{0.3cm} \\
\displaystyle r_i(n)
=\left[\left(r_{i-1}(n_{i})+u_i(n)\right)
\mbox{\;mod\;} K\right]+1 \;\;\;, \;\; i=2, ..., M \;,
\ena
\eeq
where $u_i(n)=\min\left(0, 1, ..., K-i\right)$ s.t $r_i(n)\in\mathcal{K}(n)$, mod denotes the modulo operator, and $r_i(n_{1})=i$. If there is no solution to $r_i(n)$ (i.e., when $|\mathcal{K}(n)|<M$), then $r_i(n)$ remains empty. Then, sequential tests with memory are executed for the selected processes as described in the previous sections. The following theorem shows that if $c_k=c_{k'}$ holds for all $1\leq k, k'\leq K$, then CL-$\pi c N$ is asymptotically optimal.

\begin{theorem}
\label{th:asymptotic_optimality_multi}
Assume that $c_k=c_{k'}$ holds for all $1\leq k, k'\leq K$. Let $\E(C^*), \E(C(\mathbf{s}))$ be the expected costs under CL-$\pi c N$ and any other policy $\mathbf{s}$, respectively. Then,
\beq
\bea{l}
\displaystyle \E(C^*)\;\sim\;\inf_{\mathbf{s}}\;\E(C(\mathbf{s})) \;\;\mbox{as}\;\; P_e^{max}\rightarrow 0 \;. \vspace{0.2cm}
\ena
\eeq
\end{theorem}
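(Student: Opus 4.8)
The plan is to exploit the equal-cost hypothesis to collapse the problem into an \emph{unweighted} parallel-machine scheduling problem and then to reuse, essentially verbatim, the concentration and tail estimates already developed for the single-probe case in Appendix \ref{app:asymptotic_optimality_simple}. Since $c_k=c$ for all $k$, the objective in (\ref{eq:opt1}) becomes $c\,\mathbf{E}\{\sum_{k\in\mathcal{H}_1}\tau_k\}$, so minimizing the expected cost is equivalent to minimizing the expected sum of the stopping (completion) times over the abnormal processes only. This is the decisive consequence of equal costs: the weight $\pi_k c_k$ that appears in the general index collapses to a common factor, so that once the abnormal set has been identified the residual objective is the \emph{unweighted} total completion time. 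For unequal costs one would instead face a weighted total completion time, which is not solved by the greedy index, and the argument below would break.

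First I would establish the asymptotic lower bound. Conditioning on the realization $\mathcal{H}_1=A$, I would give the policy a genie that reveals $A$; since extra information can only lower the achievable cost, any lower bound for the genie-aided problem is also a lower bound for the original. Given $A$, a change-of-measure (Wald) argument shows that every policy meeting the miss-detect constraint must draw at least $(1+o(1))\log(1/\beta_k)/D(f_k^{(1)}||f_k^{(0)})$ samples from each $k\in A$, matching the per-process sample size in (\ref{eq:sample_size_approx_H}). Under the constraint that at most $M$ processes are probed per slot, the resulting sum of real-time completion times is bounded below by the value of the shortest-processing-time (SPT) schedule of the jobs $\{p_k=\mathbf{E}(N_k|H_1)\}_{k\in A}$ on $M$ identical machines. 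Here equal costs are indispensable: for the unweighted objective $\sum_{k\in A}C_k$ the SPT rule is exactly optimal on identical parallel machines (the problem $P||\sum C_j$), so this bound is tight. Averaging over $A$ gives $\inf_{\mathbf{s}}\mathbf{E}(C(\mathbf{s}))\ge c\sum_A P(A)\,\mathrm{SPT}(A)\,(1+o(1))$.

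Second I would show CL-$\pi c N$ attains this bound. The key is that the posterior $\pi_k(n)$ in (\ref{eq:beleif_ind}) concentrates after only $O(1)$ probes: its log-odds drifts at rate $D(f_k^{(1)}||f_k^{(0)})$ under $H_1$ and $-D(f_k^{(0)}||f_k^{(1)})$ under $H_0$, so $\pi_k(n)\to 1$ on abnormal processes and $\pi_k(n)\to 0$ on normal ones within a number of samples that is $o(\log(1/P_e^{max}))$. By (\ref{eq:sample_size_approx}) the index then satisfies $\gamma_k(n)\to c/\mathbf{E}(N_k|H_1)$ on the still-undeclared abnormal processes and $\to 0$ on the normal ones, so the selection rule (\ref{eq:selection_simple_multiple}) coincides with SPT on $A$ after a negligible prefix. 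I would then reuse the tail-probability estimates of Appendix \ref{app:asymptotic_optimality_simple} to bound the expected time spent probing \emph{undesired} processes (those whose noisy statistics momentarily invert the correct order) together with the logarithmically sparse round-robin time in $\mathcal{N}_s$; both are $o(\log(1/P_e^{max}))$ and therefore vanish against the $\Theta(\log(1/P_e^{max}))$ detection times. This yields $\mathbf{E}(C^*)=c\sum_A P(A)\,\mathrm{SPT}(A)\,(1+o(1))$, matching the lower bound.

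The main obstacle will be the rigorous coupling between the per-process \emph{sample-count} requirements and the \emph{real-time} completion times under the $M$-probe constraint—that is, converting the information-theoretic sample bounds into a combinatorial completion-time bound and then verifying that the greedy index schedule actually realizes the SPT optimum despite random processing times and the staggered, data-dependent instants at which machines free up. The resolution is that the sample sizes $N_k$ concentrate ($N_k/\mathbf{E}(N_k|H_1)\to 1$ as $P_e^{max}\to 0$), so the jobs become asymptotically deterministic and SPT-by-mean is optimal; and it is precisely the equal-cost hypothesis that places us in the single regime ($P||\sum C_j$) where the greedy shortest-expected-time rule is provably optimal on parallel machines, which is why the theorem is stated under $c_1=\cdots=c_K$.
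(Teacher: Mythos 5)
Your proposal is correct and follows essentially the same route as the paper: both reduce the equal-cost objective to the unweighted total-completion-time problem $P\,||\sum C_j$ on $M$ identical machines, invoke the exact optimality of the shortest-processing-time order there (the paper cites Pinedo, Theorem 5.4.2) together with per-process information-theoretic sample-size lower bounds to get the asymptotic lower bound, and then show that the concentrating indices make CL-$\pi c N$ coincide with that order after a prefix whose cost is negligible via the tail bounds on $T_1$ from the single-probe appendix. The only cosmetic difference is that you phrase the lower bound via a genie and a change-of-measure argument, whereas the paper truncates the observation vectors to deterministic lengths and cites Tartakovsky's probabilistic sample-size bound; these are interchangeable here.
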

\begin{proof}
See Appendix \ref{app:asymptotic_optimality_multi}. \vspace{0.2cm}
\vspace{0.2cm}
\end{proof}

\section{Numerical Examples}
\label{sec:simulation}

In this section we present numerical examples to illustrate the performance of the proposed CL-$\pi c N$ policy. We test the following hypotheses: under normal state, the observations from process $k$ follow Poisson distribution $y_k(n)\sim\mathrm{Poi}(\theta_k^{(0)})$, where under abnormal state the observations follow Poisson distribution $y_k(n)\sim\mathrm{Poi}(\theta_k^{(1)})$. This model applies to cyber-systems, where the observations from a probed component represent packet arrival rate under normal state or under reduction of quality attacks as in \cite{Onat_2005_Intrusion}. We compare the optimal open-loop probing strategy OL-$\pi c N$ developed in \cite{Cohen_2014_Optimal} with  CL-$\pi c N$. We set the following parameters unless otherwise specified: $c_k=\theta_k^{(0)}$ (i.e., the cost represents the normal expected traffic over the component). Thus, in this setting minimizing the total expected cost minimizes the maximal damage to the network in terms of the expected number of failed packets during a denial of service attack. Only a single component is probed at a time (i.e., $M=1$). The design parameter for the round-robin exploration is set to $\zeta=1.7$. The error constraints are set to $P_k^{FA}=10^{-3}, P_k^{MD}=10^{-6}$ and the a priori probabilities of the components being abnormal are set to $\pi_k=0.5$ for all $k$.

First, we simulate the case where $\theta_k^{(0)}$ are equally spaced in the interval $[10 , 20]$, where
$\theta_k^{(1)}=1.5\cdot\theta_k^{(0)}$ with probability $0.5$ and $\theta_k^{(1)}=1.2\cdot\theta_k^{(0)}$ with probability $0.5$. This models the situation where both strong and weak deviations from the normal state may occur. We implemented CL-$\pi c N$ under densities with unknown parameters (i.e., the level of deviation from the normal state in this scenario) as described in Section \ref{sec:composite}.
The performance of the algorithms is presented in Fig. \ref{fig:fig_composite}. It can be seen that CL-$\pi c N$ saves roughly $40\%$ of the average total cost as compared to OL-$\pi c N$.
Second, we simulate the case where $M=5$ components are probed at a time. We set $\theta_k^{(0)}=10$ for $k=1, 2, ..., K/2$, $\theta_k^{(0)}=20$ for $k=K/2+1, K/2+2, ..., K$ and $\theta_k^{(1)}=1.5\cdot\theta_k^{(0)}$. Note that in that case, asymptotic optimality is an open question due to different costs across the processes. The CL-$\pi c N$ is implemented via multi-process probing as described in Section \ref{sec:multi}. The performance of the algorithms is presented in Fig. \ref{fig:fig3}. It can be seen that CL-$\pi c N$ significantly outperforms OL-$\pi c N$ under this setting as well.

Next, we examine the interesting case where any switching to components $k=1, ..., K/2$ adds a delay $d_1$, while any switching to components $k=K/2+1, ..., K$ adds a delay $d_2$. This models the situation (as in power systems or communication networks for instance) where monitoring different components requires an initialization process which results in different delays. Note that for any fixed delay incurred by switching among components, the CL-$\pi c N$ preserves its optimality in the asymptotic regime. This can be verified by Lemmas \ref{lemma:T1}, \ref{lemma:asymptotic_simple} showing that the time spent until the desired asymptotic order is preserved (where switching no longer occurs) is small enough and does not affect the asymptotic expected cost. In the finite regime, however, one should reduce the number of switchings as the delay incurred in switching increases. As discussed in \cite{Cohen_2014_Optimal}, the advantage of OL-$\pi c N$ is that only $K-1$ switchings among components are required. Hence, we expect OL-$\pi c N$ to outperform CL-$\pi c N$ in the finite regime as the delay incurred in switching increases. We set $\theta_k^{(0)}=10$ for $k=1, 2, ..., K/2$, $\theta_k^{(0)}=20$ for $k=K/2+1, K/2+2, ..., K$ and $\theta_k^{(1)}=1.5\cdot\theta_k^{(0)}$. We set $d_1=1$. Let $\rho=\frac{C_{CL}}{C_{OL}}$, where $C_{CL}$, $C_{OL}$, are the average total costs under CL-$\pi c N$ and OL-$\pi c N$, respectively. The performance of the algorithms is presented in Fig. \ref{fig:fig2}, where $d_2$ ranges between $0$ to $8$ time units. It can be seen that CL-$\pi c N$ saves roughly $30\%-40\%$ of the average total cost as compared to OL-$\pi c N$ when $d_2=0$. On the other hand, OL-$\pi c N$ may be preferred for $d_2>8$.

The next numerical example demonstrates the trade-off curve between the average total cost and the error probabilities (i.e., a Bayes risk) to quantify the threshold effects of the sequential tests. We set $K=10$ and $\theta_k^{(0)}=10$, $\theta_k^{(1)}=15$, $c_k=1$, $\pi_k=0.5$ for all $k$. We assign a cost $c_e$ for a wrong declaration and examine the following normalized (by $c_e$) Bayes risk: $R\triangleq\sum_{k\in\mathcal{H}_1}\left[\frac{1}{c_e}\tau_k+\left(P_k^{FA}+P_k^{MD}\right)\right]$. The log-Bayes risk is presented in Fig. \ref{fig:Bayes_risk} as a function of $\log c_e$, with the corresponding error probabilities $P_e$. As expected, as the cost for a wrong declaration $c_e$ increases, the error probability decreases. Note also that the Bayes risk decreases as $c_e$ increases. Intuitively speaking, this result follows from the fact that the minimal sample size under a sequential testing has the order of $\log(c_e)$, and $P_e$ has the order of $1/c_e$ \cite{{Chernoff_1959_Sequential}}. Thus, the log-Bayes risk decreases approximately linearly with $\log c_e$ as $c_e$ increases.

Finally, we demonstrate the loss of optimality in the asymptotic regime when the round-robin selection rule is not executed. We set $K=2$, $\theta_1^{(0)}=\theta_2^{(0)}=10$, $\theta_1^{(1)}=10.1, \theta_2^{(1)}=10.3$ (i.e., small deviations from normal states are required to be detected), $\pi_1=0.9, \pi_2=0.1$, $c_1=c_2=1$.
We simulated CL-$\pi c N$ under $\zeta=1.005$ (i.e., the round-robin scheduling is executed very frequently) and $\zeta\rightarrow\infty$ (i.e., the round-robin scheduling is not executed).
Let $\rho=\frac{C_{CL}(\zeta=1.005)}{C_{CL}(\zeta\rightarrow\infty)}$, where $C_{CL}(\zeta=1.005)$ and $C_{CL}(\zeta\rightarrow\infty)$ are the average total costs under CL-$\pi c N$ with $\zeta=1.005$ and $\zeta\rightarrow\infty$, respectively. The performance of the algorithms as a function of the error probability for process $1$ is presented in Fig. \ref{fig:fig_gain_explore}. The error probability for process $2$ was set such that $\gamma_1(1)=2\gamma_2(1)$ holds.
It can be seen that setting $\zeta=1.005$ outperforms $\zeta\rightarrow\infty$ as the error probability decreases. This result demonstrates the significance of the round-robin selection rule to guarantee optimality in the asymptotic regime. It should be noted, however, that the loss by removing the round-robin scheduling (i.e., always setting $\zeta\rightarrow\infty$) is small and CL-$\pi c N$ may perform well with $\zeta\rightarrow\infty$ under typical error probabilities.

\begin{figure}[htbp]
\centering \epsfig{file=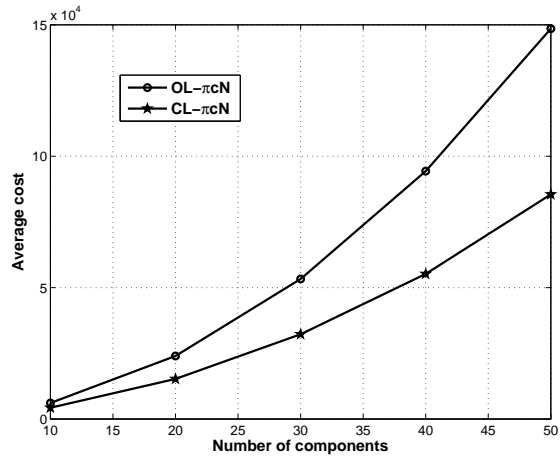,
width=0.45\textwidth}
\caption{The average total cost as a function of the number of components. A case where both strong and weak deviations from the normal state may occur with equal probability.}
\label{fig:fig_composite}
\end{figure}

\begin{figure}[htbp]
\centering \epsfig{file=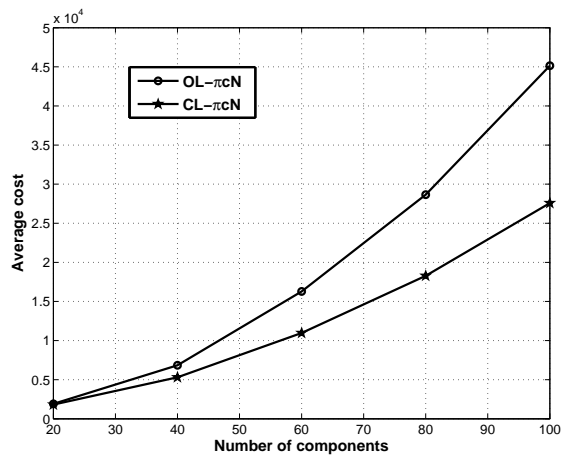,
width=0.45\textwidth}
\caption{The average total cost as a function of the number of components. A case where $M=5$ components are probed at a time.}
\label{fig:fig3}
\end{figure}

\begin{figure}[htbp]
\centering \epsfig{file=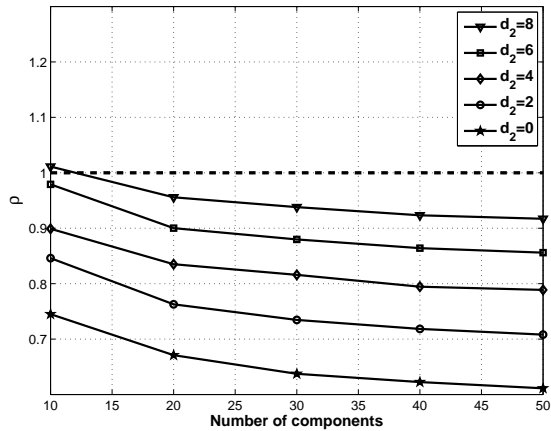,
width=0.45\textwidth}
\caption{The gain $\rho=\frac{C_{CL}}{C_{OL}}$ as a function of the number of components and the delay incurred by switching. Switching to components $1, ..., K/2$ adds delay $d_1=1$ time unit, while switching to components $K/2+1, ..., K$ adds delay $d_2$, which ranges between $0$ to $8$ time units. The CL-$\pi c N$ policy outperforms the OL-$\pi c N$ policy for all $\rho\leq 1$.}
\label{fig:fig2}
\end{figure}

\begin{figure}[htbp]
\centering \epsfig{file=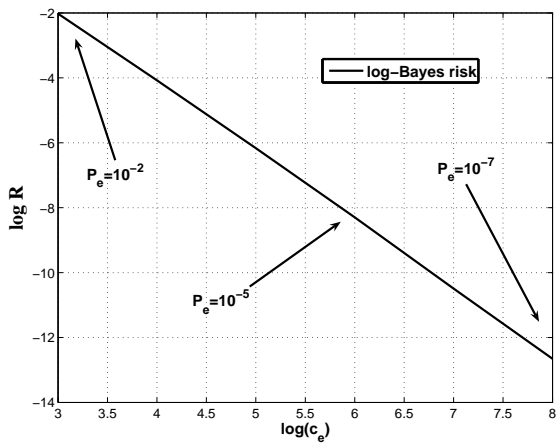,
width=0.45\textwidth}
\caption{The tradeoff curve between the average total cost and the error probabilities (i.e., Bayes risk) as a function of the cost for a wrong declaration.}
\label{fig:Bayes_risk}
\end{figure}

\begin{figure}[htbp]
\centering \epsfig{file=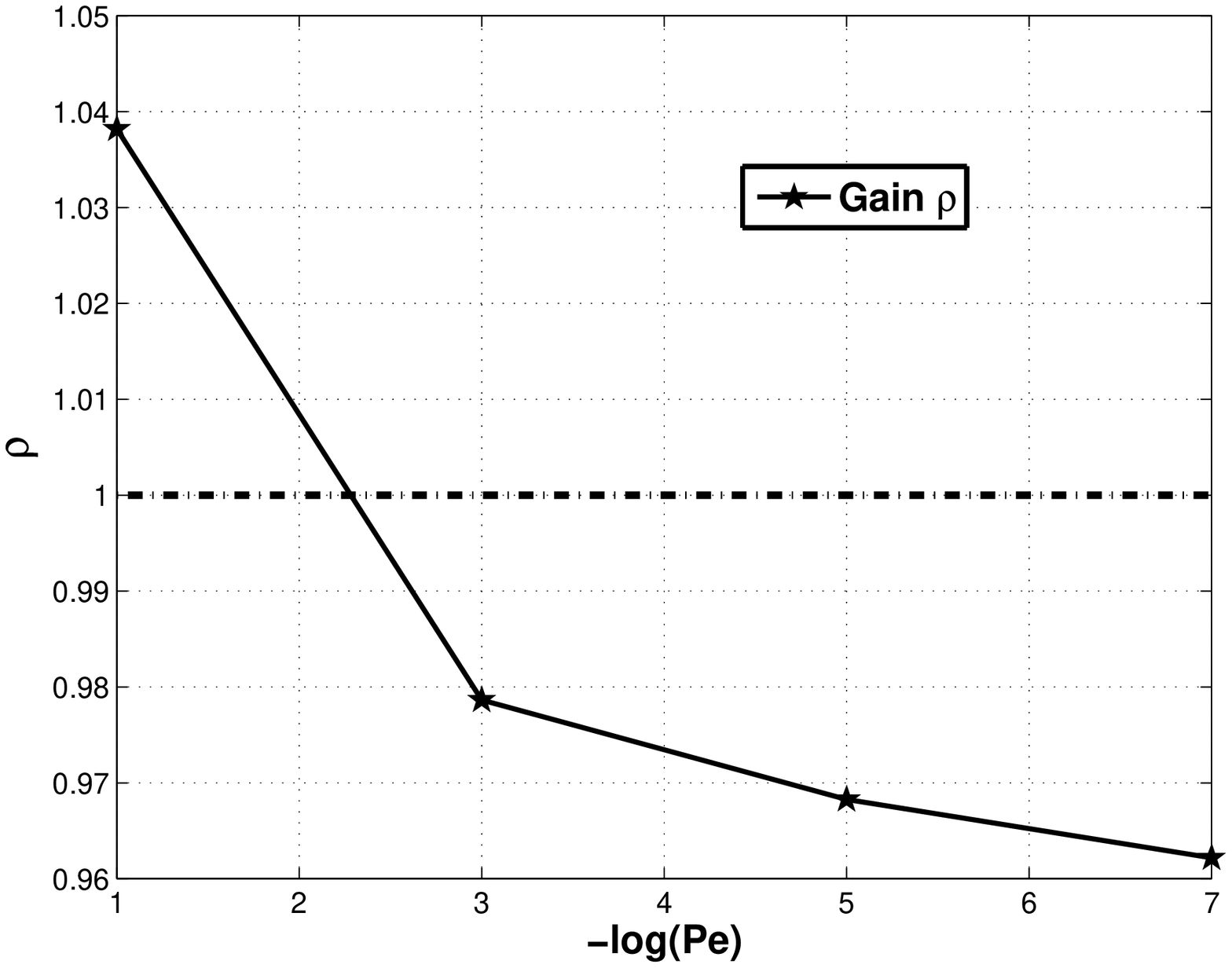,
width=0.45\textwidth}
\caption{The gain $\rho=\frac{C_{CL}(\zeta=1.005)}{C_{CL}(\zeta\rightarrow\infty)}$ as a function of the error probability for process $1$. The CL-$\pi c N$ policy under $\zeta=1.005$ outperforms the CL-$\pi c N$ policy under $\zeta\rightarrow\infty$ for all $\rho\leq 1$.}
\label{fig:fig_gain_explore}
\end{figure}

\section{Conclusion}
\label{sec:conclusion}

The problem of sequential detection of independent anomalous processes among $K$ processes was considered. At each time, only a subset of the processes can be observed, and the observations from each chosen process follow two different distributions, depending on whether the process is normal or abnormal. Each anomalous process incurs a cost per unit time until it is identified. The objective is a sequential search strategy that minimizes the total expected cost incurred by all the processes during the entire detection process, under reliability constraints. Asymptotically optimal closed-loop policies were developed and strong performance in finite regime was demonstrated via simulations as compared to the optimal open-loop policies when the cost incurred by switching across processes is not too high.

\section{Appendix}
\label{app}

In this appendix we prove the asymptotic optimality of the proposed tests as the error constraints approach zero. For purposes of analysis, we assume that the asymptotic expected sample sizes $\E(N_k|H_0), \E(N_{k'}|H_1)$ have the same order for all $k ,k'$. This condition implies that $\log(P_k^{FA})/\log(P_{k'}^{MD})$ is bounded away from zero and infinity for every pair $k, k'$. Throughout the proof, we use the fact that the round-robin selection rule (i.e., second line in (\ref{eq:selection_simple})) observes all the processes according to a predetermined order at times $n=\lceil \zeta^{\ell}\rceil$, for $\ell=1, 2, ...$, where $\zeta$ is a design parameter. We will show that asymptotic optimality holds when $\zeta$ is set sufficiently close to $1$.

Deriving asymptotic optimality is done in two steps. First, we establish the asymptotic lower bound on the total cost that can be achieved by any policy. Second, we show that CL-$\pi c N$ achieves the lower bound in the asymptotic regime. The key in proving the second step is to upper bound the tail of the distribution of some ancillary random times. Specifically, when CL-$\pi c N$ is implemented indefinitely (i.e., CL-$\pi c N$ probes the processes indefinitely according to its selection rule, while the stopping rules and decision rules are disregarded), we can define an event $T_1$ in which for all $n\geq T_1$, the index $\gamma_k(n)$ is a sufficient indication to the process state. The event $T_1$ depends on the future and the true state, and is not a stopping time. The decision maker does not know whether it has arrived. However, we show that $T_1$ is sufficiently small. As a result, we show that when CL-$\pi c N$ is implemented in the asymptotic regime ($P_e^{max}\rightarrow 0$ and thus the detection time approaches infinity), the cost incurred by abnormal processes during the first $T_1$ time units does not affect the asymptotic total expected cost.

\subsection{Proof of Theorem \ref{th:asymptotic_optimality_simple}}
\label{app:asymptotic_optimality_simple}

In this section we prove the asymptotic optimality of CL-$\pi c N$ under the case where the densities are completely known. Note that the SPRT's boundary values (used to test every process) satisfy $B_k=-\log(\alpha_k), A_k=-\log(\beta_k)$ in the asymptotic regime. Let $\E^*(N_k|H_i)$ be the expected sample size for process $k$ under the SPRT. Without loss of generality we assume that $\mathcal{H}_1=\left\{1, 2, ..., K_1\right\}$, $\mathcal{H}_0=\left\{K_1+1, K_1+2, ..., K\right\}$ and\footnote{In cases where processes have the same $c_i/E^*(N_i|H_1)$ , we can arbitrarily order them (by computing their index using a modified cost with an additive small noise $\tilde{c}_k=c_k+\epsilon_k$) without affecting the objective function in the asymptotic regime.}
\beq
\bea{l}
\label{app:ordering}
\displaystyle\frac{c_1}{\E^*(N_1|H_1)}>\frac{c_2}{\E^*(N_2|H_1)}> 
\displaystyle\cdots>\frac{c_{K_1}}{\E^*(N_{K_1}|H_1)}\;.
\ena
\eeq

The proof is mainly based on Lemmas \ref{lemma:lower_simple}, \ref{lemma:asymptotic_simple}. In lemma \ref{lemma:lower_simple}, we establish the asymptotic lower bound on the expected cost that can be achieved by any policy. Then, Lemma \ref{lemma:asymptotic_simple} shows that CL-$\pi c N$ achieves the lower bound in the asymptotic regime. \vspace{0.2cm}

\begin{lemma}
\label{lemma:lower_simple}
Let $\E(C(s))$ be the total expected cost under policy $s$ that satisfies the error constraints in (\ref{eq:opt1}). Then,
\beq
\label{eq:lwer_bound}
\bea{l}
\displaystyle\inf_s \E(C(s)) \geq (1-o(1))\sum_{i=1}^{K_1}c_i\sum_{k=1}^{i}\frac{B_k}{D(f_k^{(1)}||f_k^{(0)})}  \;,
\ena
\eeq
where $o(1)\rightarrow 0$ as $P_e^{max}\rightarrow 0$.
\end{lemma}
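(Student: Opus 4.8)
The plan is to prove the bound in three stages: a per-process information-theoretic lower bound on the number of samples each abnormal process requires; a pathwise serialization argument that turns the single-prober constraint ($M=1$) into a weighted completion-time expression; and finally a rearrangement (scheduling) argument that identifies the cost-minimizing order as exactly the one fixed in (\ref{app:ordering}).

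First I would establish that, for any admissible policy $s$ meeting the constraints in (\ref{eq:opt1}) and for each abnormal process $k$, the expected number of observations $\tilde N_k$ actually drawn from process $k$ before its declaration satisfies $\E(\tilde N_k\mid H_1)\geq(1-o(1))\,B_k/D(f_k^{(1)}||f_k^{(0)})=(1-o(1))\,\E^*(N_k|H_1)$. This is Wald's information inequality: stopping with $P_k^{MD}\leq\beta_k$ and $P_k^{FA}\leq\alpha_k$ forces the accumulated LLR of process $k$ to reach level $B_k\approx-\log\alpha_k$, and optional stopping applied to the process-$k$ LLR martingale gives $\E(\tilde N_k|H_1)\,D(f_k^{(1)}||f_k^{(0)})\geq(1-\beta_k)B_k+\beta_k A_k=(1-o(1))B_k$. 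A point worth stressing is that this per-process bound is unaffected by the interleaving of observations from other processes: since the observations of process $k$ are i.i.d. and \emph{independent} of the states and observations of the other processes, those side observations carry no information about process $k$, and the single-process limit continues to apply to the (adaptively chosen) subsequence of times at which $k$ is probed.

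Next I would exploit $M=1$. Let $\sigma$ denote the (random) order in which the abnormal processes are declared. Because at most one observation is taken per unit time, on the event that $1,\dots,K_1$ are the abnormal processes the declaration time of the $j$-th completed process dominates the total work devoted to the first $j$ completed processes, $\tau_{\sigma(j)}\geq\sum_{l=1}^{j}\tilde N_{\sigma(l)}$, so that, pathwise, $C(s)=\sum_{j=1}^{K_1}c_{\sigma(j)}\tau_{\sigma(j)}\geq\sum_{j=1}^{K_1}c_{\sigma(j)}\sum_{l=1}^{j}\tilde N_{\sigma(l)}$. The right-hand side is precisely the weighted completion-time objective of a single-machine scheduling problem with weights $c_k$ and processing times $\tilde N_k$. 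By the rearrangement (Smith) rule, for any fixed processing times this objective is minimized by sequencing in decreasing order of $c_k/\tilde N_k$, with value $\sum_{i}c_{(i)}\sum_{l\leq i}\tilde N_{(l)}$; since the ratios $c_k/\E^*(N_k|H_1)$ are ordered as in (\ref{app:ordering}), combining the stage-one bound $\E(\tilde N_k|H_1)\geq(1-o(1))\E^*(N_k|H_1)$ with $\E^*(N_k|H_1)=(1-o(1))B_k/D(f_k^{(1)}||f_k^{(0)})$ is meant to yield the asserted bound $\sum_{i=1}^{K_1}c_i\sum_{k=1}^{i}B_k/D(f_k^{(1)}||f_k^{(0)})$.

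The hard part is precisely this last substitution, because the processing times $\tilde N_k$ are random and \emph{coupled} with the realized completion order $\sigma$: one cannot simply pass the pathwise Smith bound through the expectation, since the pathwise optimal-order value is concave in the processing times and Jensen then points the wrong way. I would therefore carry out a pairwise exchange argument \emph{directly in expectation}, decomposing $\sum_{j}c_{\sigma(j)}\sum_{l\leq j}\tilde N_{\sigma(l)}=\sum_k c_k\tilde N_k+\sum_{\{k,k'\}}(\text{pair term})$ and lower-bounding each pair term's expectation by $(1-o(1))\min\!\big(c_{k'}\E^*(N_k|H_1),\,c_k\E^*(N_{k'}|H_1)\big)$, a quantity that the order in (\ref{app:ordering}) attains simultaneously for all pairs. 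Controlling the correlation between the event that $k$ precedes $k'$ and the sample counts is where the regularity assumption (that $\E(N_k|H_0)$ and $\E(N_{k'}|H_1)$ share a common order, so $\log P_k^{FA}/\log P_{k'}^{MD}$ stays bounded) and the asymptotic regime $P_e^{max}\to0$ enter, ensuring that the fluctuations of $\tilde N_k$ about its mean are of lower order than the $\Theta(\log(1/P_e^{max}))$ leading term and are thus absorbed into the $o(1)$.
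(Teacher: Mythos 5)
Your overall skeleton (a per-process sample-size lower bound, serialization of the single prober into a weighted completion-time expression, and Smith's rule to identify the optimal order) matches the paper's proof in spirit, and you have correctly put your finger on the genuine difficulty: the processing times are random and coupled with the realized probing order, so the pathwise Smith bound cannot simply be pushed through the expectation. The gap is that your proposed resolution does not close this. Your stage-one input is Wald's inequality \emph{in expectation}, $\E(\tilde N_k\mid H_1)\geq(1-o(1))B_k/D(f_k^{(1)}||f_k^{(0)})$, and your pairwise-exchange step then implicitly needs bounds of the form $\E[\tilde N_k\mathbf{1}(k\prec k')]\geq(1-o(1))\,\E(\tilde N_k)\Pr(k\prec k')$. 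An arbitrary admissible policy can correlate the order with the sample counts (e.g., probe $k$ first precisely on an event where it stops $k$ early), and an expectation bound plus an appeal to ``fluctuations of lower order'' does not rule this out: for a general policy $\tilde N_k$ need not concentrate, and the error constraints in (\ref{eq:opt1}) are unconditional, so they do not directly control the conditional sample size on the event $\{k\prec k'\}$.

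What is actually needed --- and what the paper uses --- is a \emph{probabilistic} lower bound on the sample size: for every $\epsilon>0$ and every policy meeting the constraints, $\Pr\bigl(N_k>(1-\epsilon)B_k/D(f_k^{(1)}||f_k^{(0)})\bigr)\rightarrow 1$ as $P_e^{max}\rightarrow 0$, imported from \cite[Lemma 2.1]{Tartakovsky_1998_Asymptotic}; this is a change-of-measure statement strictly stronger than Wald's identity in expectation. On the intersection of these high-probability events the paper truncates each observation record to the \emph{deterministic} length $\tilde N_k=(1-\epsilon)B_k/D(f_k^{(1)}||f_k^{(0)})$, which only lowers the cost, and then applies Smith's rule \cite{Smith_1956_Various} to a scheduling instance with constant processing times --- so there is nothing left to couple with the order, and the bound holds pathwise on an event of probability $1-o(1)$. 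Letting $\epsilon\rightarrow 0$ yields (\ref{eq:lwer_bound}). If you replace your expectation-based stage one with this high-probability bound (keeping your serialization step and the reduction to testing abnormal processes first, both of which are fine), your argument becomes essentially the paper's.
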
  \vspace{0.2cm}
\begin{proof}
Note that observing normal processes before declaring the states of abnormal processes can only increase the total expected cost. Hence, for establishing the lower bound on the actual cost we assume that all the abnormal processes are tested before those in a normal state.

Let $\mathbf{y}_k$ be the vector of observations taken from process $k$ and $\mathbf{y}=(\mathbf{y}_1, ..., \mathbf{y}_K)$ be the collection of the observation vectors. Let
\beq
\mathcal{Y}_\epsilon(s)=\left\{\mathbf{y} : N_k>(1-\epsilon)\frac{B_k}{D(f_k^{(1)}||f_k^{(0)})} \;\forall k\right\}
\eeq
be the set of all possible observations collected from the processes with sample sizes satisfying $N_k>(1-\epsilon)\frac{B_k}{D(f_k^{(1)}||f_k^{(0)})}$ for all $k$ under policy $s$. Let $C_{\mathcal{Y}_\epsilon(s)}(\mathbf{y})$ be the total cost incurred by the processes when observations $\mathbf{y}\in\mathcal{Y}_\epsilon(s)$ were taken under policy $s$. \\
Next, we lower bound $C_{\mathcal{Y}_\epsilon(s)}(\mathbf{y})$. We define a modified vector of observations for process $k$, $\tilde{\mathbf{y}}_k$ with length $\tilde{N}_k\triangleq (1-\epsilon)\frac{B_k}{D(f_k^{(1)}||f_k^{(0)})}\leq N_k$ by removing observations $\tilde{N}_k+1, \tilde{N}_k+2, ..., N_k$ for all $k$. The set $\tilde{\mathcal{Y}}_\epsilon(s)$ is defined accordingly as the set of the modified vectors of observations.
Let $C_{\tilde{\mathcal{Y}}_\epsilon(s)}(\tilde{\mathbf{y}})$ be the total cost incurred by the modified vectors of observations, where the selection rule under $s$ skips the time indices that have been removed. As a result, $C_{\tilde{\mathcal{Y}}_\epsilon(s)}(\tilde{\mathbf{y}})\leq C_{\mathcal{Y}_\epsilon(s)}(\mathbf{y})$.

Following the Smith rule \cite{Smith_1956_Various}, minimizing $C_{\tilde{\mathcal{Y}}_\epsilon(s)}(\tilde{\mathbf{y}})$ is done by ordering the processes in decreasing order of $c_k/\tilde{N}_k$. Since $\E^*(N_k|H_1)\rightarrow\tilde{N}_k/(1-\epsilon)$ as $P_e^{max}\rightarrow 0$ \cite{Wald_1947_Sequential}, we have:
\beq
\bea{l}
\label{app:l1_lower_bound}
\displaystyle \inf_s C_{\mathcal{Y}_\epsilon(s)}(\mathbf{y})
\geq(1-\epsilon)\sum_{i=1}^{K_1}c_i\sum_{k=1}^{i}\frac{B_k}{D(f_k^{(1)}||f_k^{(0)})} \vspace{0.3cm} \\ \hspace{5cm}
\mbox{as\;\;} P_e^{max}\rightarrow 0\;.
\ena
\eeq
Finally, we apply \cite[Lemma 2.1]{Tartakovsky_1998_Asymptotic}, where an asymptotic probabilistic lower bound on the sample size achieved by any test (for a single process) that satisfies specific error constraints was established. The lemma was originally stated for a more general case of $M$-ary hypothesis testing and non-i.i.d. observations. It requires a weaker condition on the convergence of a (variation of) the average LLR than the strong law of large numbers. Therefore, it directly applies to the case of binary hypothesis and i.i.d. observations (i.e., the strong law of large numbers implies the convergence of the average LLR to the corresponding KL divergence), considered in this paper. Specifically, applying \cite[Lemma 2.1, Eq. (2.13)]{Tartakovsky_1998_Asymptotic} to our model yields:
\beq
\bea{l}
\displaystyle\inf_s\Pr\left(N_k>\frac{(1-\epsilon)B_k}{D(f_k^{(1)}||f_k^{(0)})}\right)= 1 \;\; \mbox{as} \;\; P_e^{max}\rightarrow 0 \vspace{0.2cm} \\ \hspace{5cm}
\forall k\in\mathcal{H}_1 \;.
\ena
\eeq
Hence, $\Pr\left(\mathbf{y}\in\mathcal{Y}_\epsilon(s)\right)=1$ as $P_e^{max}\rightarrow 0$ for every $\epsilon>0$, which completes the proof.
\end{proof} \vspace{0.2cm}
%
For the next lemmas, we assume that CL-$\pi c N$ is implemented and show that CL-$\pi c N$ achieves the asymptotic lower bound on the expected total cost (\ref{eq:lwer_bound}) as $P_e^{max}\rightarrow 0$.
 \vspace{0.2cm}

\begin{definition}
For every $0<\epsilon<1$, $T_1(\epsilon)$ is defined as the smallest integer such that $\pi_k(n)\geq 1-\epsilon$ for all $k\in\mathcal{H}_1$ and $\pi_k(n)\leq\epsilon$ for all $k\in\mathcal{H}_0$ for all $n\geq T_1(\epsilon)$. \vspace{0.2cm}
\end{definition}
In the following lemma we show that $T_1(\epsilon)$ is sufficiently small.  \vspace{0.2cm}
\begin{lemma}
\label{lemma:T1_eps_simple}
Assume that CL-$\pi c N$ is implemented indefinitely. Then, for every fixed $0<\epsilon<1$ and $\nu>0$, there exists $\delta>0$ such that for all $1<\zeta\leq 1+\delta$ the following holds:
\beq
\label{eq:lemma:T1_eps}
\Pr\left(T_1(\epsilon)>n\right)\leq O(n^{-\nu})\;. \vspace{0.4cm}
\eeq
\end{lemma}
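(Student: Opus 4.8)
The plan is to translate the belief condition defining $T_1(\epsilon)$ into a boundary-crossing event for the cumulative LLR random walk, and then to control its tail by a large-deviation estimate whose decay exponent is driven by the frequency of the round-robin exploration. First I would record the exact relation between belief and sum LLR: iterating the Bayes update (\ref{eq:beleif_ind}) gives $\log\frac{\pi_k(n)}{1-\pi_k(n)}=\log\frac{\pi_k(1)}{1-\pi_k(1)}+S_k(n)$, so $\pi_k(n)$ is a strictly increasing function of $S_k(n)$. Hence for $k\in\mathcal{H}_1$ the event $\pi_k(n)<1-\epsilon$ is exactly $\{S_k(n)<b(\epsilon)\}$, and for $k\in\mathcal{H}_0$ the event $\pi_k(n)>\epsilon$ is exactly $\{S_k(n)>a(\epsilon)\}$, where $a(\epsilon),b(\epsilon)$ are constants depending only on $\epsilon$ and the prior. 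Since $\{T_1(\epsilon)>n\}=\bigcup_{m\geq n}\{\text{some belief condition fails at }m\}$, a union bound over the $K$ processes reduces the lemma to bounding $\sum_{m\geq n}\Pr\big(S_k(m)\text{ on the wrong side}\big)$.

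Second, I would use the exploration subsequence to obtain a deterministic lower bound on the number of samples drawn from each process. Because CL-$\pi c N$ is run indefinitely, every process remains in $\mathcal{K}(n)$, so the round-robin rule (\ref{eq:round_robin}) probes each process once per full cycle of $K$ exploration instants. As $\mathcal{N}_s=\{\lceil\zeta^\ell\rceil\}$ contains at least $\frac{\ln m}{\ln\zeta}-O(1)$ instants below $m$, the number $N_k(m)$ of observations taken from process $k$ by time $m$ satisfies, deterministically, $N_k(m)\geq L(m):=\frac{\ln m}{K\ln\zeta}-O(1)$.

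Third, I would write $S_k(m)=W_k(N_k(m))$, where $W_k(j)=\sum_{i=1}^{j}\ell_k^{(i)}$ is the i.i.d.\ LLR walk built from the successive observations of process $k$ (these are i.i.d.\ because sampling is non-anticipating). Under $H_1$ this walk has positive drift $\mathbf{E}_1[\ell_k]=D(f_k^{(1)}||f_k^{(0)})>0$, and under $H_0$ negative drift. Using only the deterministic bound $N_k(m)\geq L(m)$ together with the pathwise inclusion $\{W_k(N_k(m))<b\}\subseteq\{\inf_{j\geq L(m)}W_k(j)<b\}$, the problem reduces to a boundary-crossing probability for a fixed drifting walk. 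A Chernoff estimate $\Pr(W_k(j)<b)\leq e^{sb}\,(\mathbf{E}_1[e^{-s\ell_k}])^{j}$ with a suitable small $s>0$ (invoking the regularity assumption that the LLR admits a finite exponential moment on the relevant side) yields $\Pr(\inf_{j\geq L(m)}W_k(j)<b)\leq C\,e^{-\rho L(m)}=O\big(m^{-\rho/(K\ln\zeta)}\big)$ for some $\rho>0$; the normal processes are handled symmetrically with the negative-drift walk.

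Finally I would assemble the pieces. Writing $\mu:=\rho/(K\ln\zeta)$, the per-instant failure probability is $O(m^{-\mu})$, so $\Pr(T_1(\epsilon)>n)\leq K\sum_{m\geq n}O(m^{-\mu})=O\big(n^{-(\mu-1)}\big)$. The decisive point is that $\mu\to\infty$ as $\zeta\to 1^{+}$, so for any prescribed $\nu>0$ there is $\delta>0$ with $\mu-1>\nu$ whenever $1<\zeta\leq 1+\delta$, giving (\ref{eq:lemma:T1_eps}). I expect the main obstacle to be the third step: the sample counts $N_k(m)$ are random and the walk is not monotone, so one cannot simply evaluate $W_k$ at $L(m)$. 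The resolution is the pathwise inclusion to the infimum over $j\geq L(m)$ combined with a summable Chernoff tail, which simultaneously makes transparent why driving $\zeta$ toward $1$ (more frequent exploration, hence larger $L(m)$) is precisely what purchases an arbitrarily large polynomial decay rate.
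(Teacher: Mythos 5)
Your proposal is correct and follows essentially the same route as the paper: convert the belief thresholds into boundary-crossing events for the sum-LLR walk via the log-odds identity, use the round-robin subsequence to guarantee at least $\log m/(K\log\zeta)$ samples per process by time $m$, and conclude that the resulting polynomial decay exponent can be made arbitrarily large by taking $\zeta$ close to $1$. The only difference is that the paper outsources the large-deviation/summation step to the argument in the cited reference on controlled sensing, whereas you write out the Chernoff bound, the pathwise inclusion into the infimum, and the union bound explicitly (and your caution about an exponential-moment condition is unnecessary, since $\mathbf{E}_1\left[e^{-s\ell_k}\right]$ is automatically finite for $s\in[0,1]$).
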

\begin{proof}
Let $d_k\triangleq\frac{1-\pi_k(1)}{\pi_k(1)}$ and
\beq
\bea{l}
\displaystyle M_k^{(1)}\triangleq-\log\left(\frac{\epsilon}{d_k(1-\epsilon)}\right)\;, \vspace{0.3cm}\\
\displaystyle M_k^{(0)}\triangleq-\log\left(\frac{d_k\epsilon}{1-\epsilon}\right)\;.
\ena
\eeq
By rewriting the update formula in (\ref{eq:beleif_ind}), it can be shown that:
\beq
\displaystyle\pi_k(n)=\left(d_k e^{-S_k(n)}+1\right)^{-1}.
\eeq
As a result, $\pi_k(n)\geq 1-\epsilon$ iff $S_k(n)\geq M_k^{(1)}$ and $\pi_k(n)\leq\epsilon$ iff $S_k(n)\leq -M_k^{(0)}$, where $S_k(n)$ is the sum of i.i.d. r.v (i.e., LLR) with mean $\E(\ell_k(n))=D(f_k^{(1)}||f_k^{(0)})>0$ for all $k\in\mathcal{H}_1$ and $\E(\ell_k(n))=-D(f_k^{(0)}||f_k^{(1)})<0$ for all $k\in\mathcal{H}_0$. Since the round-robin selection guarantees that for large $n$, $\log n/(K\log \zeta)$ samples are taken from every process up to time $n$, (\ref{eq:lemma:T1_eps}) follows for an arbitrarily large $\nu$ following the same argument as in \cite{Nitinawarat_2013_Controlled} when $\zeta$ is set sufficiently close to $1$.
\end{proof} \vspace{0.2cm}

\begin{definition}
$T_1$ is defined as the smallest integer such that $\gamma_1(n)>\gamma_2(n)>\cdots>\gamma_{K_1}(n)>\max_{k\in\mathcal{H}_0}\gamma_k$ for all $n\geq T_1$. \vspace{0.2cm}
\end{definition}
Before presenting the next lemma, we provide an intuition for the definition of $T_1$. Assume that no state has been declared by time $T_1$. Then, $T_1$ represents the earliest time where the testing order required to achieve the asymptotic lower bound (i.e., the order: $1, 2, ..., K_1$) is preserved for all $n\geq T_1$. In the following lemma we show that $T_1$ is sufficiently small, such that the cost incurred by abnormal processes during $T_1$ does not affect the asymptotic expected total cost.  \vspace{0.2cm}
\begin{lemma}
\label{lemma:T1}
Assume that CL-$\pi c N$ is implemented indefinitely. Then, for every fixed $\nu>0$, there exists $\delta>0$ such that for all $1<\zeta\leq 1+\delta$ the following holds:
\beq
\label{eq:lemma:T1}
\Pr\left(T_1>n\right)\leq O(n^{-\nu}) \;. \vspace{0.4cm}
\eeq
\end{lemma}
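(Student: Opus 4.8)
The plan is to deduce Lemma \ref{lemma:T1} from the belief-convergence estimate of Lemma \ref{lemma:T1_eps_simple} by showing that the ordering event defining $T_1$ is implied, on every sample path, by the belief-accuracy event defining $T_1(\epsilon)$ for a suitably small fixed $\epsilon$.

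First I would exhibit the index as a fixed, continuous, scalar-valued function of the belief. Combining the definition of $\gamma_k(n)$ with (\ref{eq:sample_size_approx}), for $k\in\mathcal{K}(n)$ we have $\gamma_k(n)=g_k(\pi_k(n))$, where $g_k(\pi)\triangleq \pi c_k/\left(\pi\,\E(N_k|H_1)+(1-\pi)\,\E(N_k|H_0)\right)$. A direct differentiation shows $g_k$ is strictly increasing on $[0,1]$ with $g_k(0)=0$ and $g_k(1)=c_k/\E(N_k|H_1)$. Hence as the beliefs approach their true values---$\pi_k\to 1$ for $k\in\mathcal{H}_1$ and $\pi_k\to 0$ for $k\in\mathcal{H}_0$---the indices of the abnormal processes converge to $c_k/\E(N_k|H_1)$ while those of the normal processes converge to $0$. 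Since $\E(N_k|H_1)\to\E^*(N_k|H_1)$ in the asymptotic regime, the WLOG ordering (\ref{app:ordering}) makes these limiting values a strictly decreasing chain $g_1(1)>g_2(1)>\cdots>g_{K_1}(1)>0$, which is precisely the ordering demanded in the definition of $T_1$.

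Next I would convert the strictness of this limiting chain into robustness under small belief perturbations. Because there are finitely many processes and the inequalities at the limit are strict, continuity of each $g_k$ yields an $\epsilon>0$ such that whenever $\pi_k(n)\geq 1-\epsilon$ for all $k\in\mathcal{H}_1$ and $\pi_k(n)\leq\epsilon$ for all $k\in\mathcal{H}_0$, the chain $\gamma_1(n)>\gamma_2(n)>\cdots>\gamma_{K_1}(n)>\max_{k\in\mathcal{H}_0}\gamma_k(n)$ still holds; concretely one chooses $\epsilon$ so that each $g_k$ varies by less than half the smallest consecutive gap over $[1-\epsilon,1]$ and so that $g_k(\epsilon)$ (close to $0$) lies below $\min_{i\leq K_1} g_i(1-\epsilon)$. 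This shows that, pathwise, the belief-accuracy event at time $n$ implies the ordering event at time $n$, so that $T_1\leq T_1(\epsilon)$ for this fixed $\epsilon$. Consequently $\{T_1>n\}\subseteq\{T_1(\epsilon)>n\}$, and Lemma \ref{lemma:T1_eps_simple} applied with this $\epsilon$ and the prescribed $\nu$ furnishes the required $\delta>0$ so that $\Pr(T_1>n)\leq \Pr(T_1(\epsilon)>n)\leq O(n^{-\nu})$ for all $1<\zeta\leq 1+\delta$, which is (\ref{eq:lemma:T1}).

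The main obstacle is the robustness step: one must verify that the continuity neighborhood $\epsilon$ can be chosen independently of the (shrinking) error constraints, since $T_1$ and its tail bound must remain usable when the outer asymptotic argument later sends $P_e^{max}\to 0$. This reduces to checking that the consecutive gaps in (\ref{app:ordering}) stay bounded away from zero relative to the index scale as the error constraints vanish---which holds under the standing assumption that the $\E(N_k|H_i)$ share the same order across processes---and that the vanishing normal-process indices remain strictly dominated by the abnormal ones. The remaining steps are routine.
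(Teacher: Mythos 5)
Your proposal is correct and follows essentially the same route as the paper: the paper's own (much terser) proof likewise observes that Lemma \ref{lemma:T1_eps_simple} holds for every $\epsilon$, writes $\gamma_k(n)$ as a function of the belief $\pi_k(n)$, and invokes the strict ordering (\ref{app:ordering}) together with the same-order assumption on $\E^*(N_k|H_0),\E^*(N_k|H_1)$ to pick $\epsilon$ small enough that $T_1\leq T_1(\epsilon)$. Your write-up simply fills in the continuity/robustness details that the paper leaves implicit.
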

\begin{proof}
Note that Lemma \ref{lemma:T1_eps_simple} holds for any $0<\epsilon<1$ and it is assumed that $\frac{c_1}{\E^*(N_1|H_1)}>\frac{c_2}{\E^*(N_2|H_1)}>\cdots>\frac{c_{K_1}}{\E^*(N_{K_1}|H_1)}$ holds. Since $\gamma_k(n)=\frac{\pi_k(n)c_k}{\pi_k(n)\E^*(N_k|H_1)+(1-\pi_k(n))\E^*(N|H_0)}$ and $\E^*(N_k|H_0), \E^*(N_k|H_1)$ have the same order by assumption, we can choose a sufficiently small $\epsilon>0$ that satisfies the lemma.
\end{proof} \vspace{0.2cm}

In the following lemma we show that the total expected cost under CL-$\pi c N$ approaches the lower bound (\ref{eq:lwer_bound}) as $P_e^{max}\rightarrow 0$.
%
\begin{lemma}
\label{lemma:asymptotic_simple}
Let $\E(C^*)$ be the total expected cost under CL-$\pi c N$. Then,
\beq
\label{eq1:lemma:C*}
\bea{l}
\displaystyle \E(C^*)\sim\sum_{i=1}^{K_1}c_i\sum_{k=1}^{i}\frac{B_k}{D(f_k^{(1)}||f_k^{(0)})} %
\mbox{\;\;as\;\;} P_e^{max}\rightarrow 0 \;. \vspace{0.2cm}
\ena
\eeq\end{lemma}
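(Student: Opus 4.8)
The plan is to establish a matching asymptotic upper bound on $\E(C^*)$ and combine it with Lemma \ref{lemma:lower_simple}, which applies to CL-$\pi c N$ itself since this policy is admissible and satisfies the error constraints in (\ref{eq:opt1}). Hence $\E(C^*)\geq(1-o(1))\sum_{i=1}^{K_1}c_i\sum_{k=1}^{i}B_k/D(f_k^{(1)}||f_k^{(0)})$ comes for free, and it suffices to prove the reverse inequality $\E(C^*)\leq(1+o(1))\sum_{i=1}^{K_1}c_i\sum_{k=1}^{i}B_k/D(f_k^{(1)}||f_k^{(0)})$; squeezing between the two bounds then yields (\ref{eq1:lemma:C*}).

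Since $\E(C^*)=\sum_{i=1}^{K_1}c_i\,\E(\tau_i)$, the central quantity is $\E(\tau_i)$ for each abnormal process $i$. As exactly one process is probed per unit time, $\tau_i$ equals the total number of samples drawn from all processes up to the declaration of process $i$, and I would split this count by process. Once $n\geq T_1$ the ordering $\gamma_1(n)>\cdots>\gamma_{K_1}(n)>\max_{k\in\mathcal{H}_0}\gamma_k(n)$ holds, so outside the sparse set $\mathcal{N}_s$ the policy probes the undeclared processes in the order $1,2,\dots,K_1$; thus by time $\tau_i$ the processes $1,\dots,i$ have been fully tested, contributing exactly $\sum_{k=1}^{i}N_k$ samples. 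Every still-undeclared process receives only round-robin samples after $T_1$, together with whatever it was probed during $[0,T_1]$, so the total leakage to undeclared processes is $O(T_1)+O(\log\tau_i/\log\zeta)$. This yields the two-sided estimate $\sum_{k=1}^{i}N_k\leq\tau_i\leq\sum_{k=1}^{i}N_k+O(T_1)+O(\log\tau_i/\log\zeta)$.

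Taking expectations, I would use three ingredients. First, because the SPRT stopping rule depends only on the accumulated sum LLR $S_k(n)$, the intermittent probing under CL-$\pi c N$ leaves the law of $N_k$ identical to that of a standard SPRT; as processes $1,\dots,K_1$ are abnormal, renewal theory gives $\E(N_k)=\E^*(N_k|H_1)\sim B_k/D(f_k^{(1)}||f_k^{(0)})=\Theta(\log(1/P_e^{max}))$. Second, Lemma \ref{lemma:T1} gives $\Pr(T_1>n)\leq O(n^{-\nu})$ for arbitrarily large $\nu$, so with $\nu>1$ one gets $\E(T_1)=\sum_{n\geq0}\Pr(T_1>n)=O(1)$; and because the reduction through Lemma \ref{lemma:T1_eps_simple} ties $T_1$ to the convergence of the beliefs $\pi_k(n)$ — governed by the KL divergences rather than by $P_e^{max}$ — this constant stays bounded as $P_e^{max}\rightarrow0$. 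Third, a short bootstrap resolves the implicit $\log\tau_i$ term: the two-sided estimate already forces $\tau_i=O(\sum_{k\leq i}N_k+T_1)$, hence $\E(\log\tau_i)=O(\log\log(1/P_e^{max}))$, so the round-robin leakage is $o(\sum_{k\leq i}\E N_k)$. Combining the three, $\E(\tau_i)\sim\sum_{k=1}^{i}B_k/D(f_k^{(1)}||f_k^{(0)})$; multiplying by $c_i$ and summing over the finite set $\{1,\dots,K_1\}$ gives the upper bound.

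The main obstacle is controlling the overhead incurred before the indices settle, i.e. bounding the effect of the random time $T_1$ uniformly in $P_e^{max}$. Because $T_1$ depends on the entire future sample path and on the true state, it is not a stopping time and is statistically coupled with the sample sizes $N_k$, so the clean decomposition above must be supported by a tail estimate guaranteeing that even on the rare event $\{T_1>\eta\log(1/P_e^{max})\}$ the extra cost remains $o(\log(1/P_e^{max}))$. The polynomial tail of Lemma \ref{lemma:T1} with $\nu$ large — obtained by choosing $\zeta$ close enough to $1$ that the round-robin exploration delivers enough samples to every process — is precisely what drives the wasted time to lower order. Verifying that these estimates hold simultaneously across all $K$ processes and aggregate into a single $(1+o(1))$ factor, while ensuring the implied constants stay bounded as $P_e^{max}\rightarrow0$, is the delicate part of the argument.
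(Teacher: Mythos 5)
Your proposal is correct and follows essentially the same route as the paper: invoke Lemma \ref{lemma:lower_simple} for the lower bound, then upper-bound the cost by the sum $\sum_{i}c_i\sum_{k\leq i}N_k$ plus an $O(T_1)$ pre-settling term (controlled via Lemma \ref{lemma:T1}) plus an $O(\log B_1)$ round-robin leakage term, both of lower order than the main term. Your explicit bootstrap for the $\log\tau_i$ leakage and the observation that the law of $T_1$ does not depend on $P_e^{max}$ are points the paper leaves implicit, but the decomposition and the key lemmas used are identical.
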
 \vspace{0.2cm}
\begin{proof}
Without loss of generality, assume that no state has been declared by time $T_1$ (otherwise, the resulting cost is even smaller than the cost computed below). Thus, for all $n\geq T_1$, CL-$\pi c N$ tests the processes in the following order: $1, 2, ..., K_1$ and then test the normal ones. Let $\bar{c}=\max_k c_k$. Since the total cost incurred up to time $T_1$ is upper bounded by $K\bar{c}T_1$, the total cost $C^*$ under CL-$\pi c N$ is upper bounded by
\beq
\label{app:l4_upper_bound}
\displaystyle C^*\leq K \bar{c} T_1+K_1\bar{c}\sum_{k=1}^{K_1}N_k^s+ \sum_{i=1}^{K_1}c_i\sum_{k=1}^{i}N_k,
\eeq
where $N_k$ is the sample size required to declare the state for process $k$ and $N_k^s$ is the observation sample size due to the round-robin selection rule for process $k$ (i.e., $\E(N_k^s)\leq O(\log B_1)$ in the asymptotic regime since the error probabilities have the same order by assumption.).
%
%
Therefore, applying Lemma \ref{lemma:T1} and using the fact that $\E^*(N_k|H_1)\rightarrow\frac{B_k}{D(f_k^{(1)}||f_k^{(0)})}$ as $P_e^{max}\rightarrow 0$ yields:
\beq
\label{lemma4:eq2}
\bea{l}
\displaystyle \E(C^*)\leq \vspace{0.2cm}\\
\displaystyle O(\log B_1)+(1+o(1))\sum_{i=1}^{K_1}c_i\sum_{k=1}^{i}\frac{B_k}{D(f_k^{(1)}||f_k^{(0)})} \;,
\ena
\eeq
where $o(1)\rightarrow 0$ as $P_e^{max}\rightarrow 0$.\\
Combining (\ref{lemma4:eq2}) and (\ref{eq:lwer_bound}) completes the proof.
\end{proof}

\subsection{Proof of Theorem \ref{th:asymptotic_optimality_composite}}
\label{app:asymptotic_optimality_composite}

In this section we prove the asymptotic optimality of the proposed policy when the densities have unknown parameters. For purposes of analysis we consider the model in \cite{Chernoff_1959_Sequential}, where $\theta_k$ can take only a finite number of values. Throughout the proof we omit steps that use similar arguments as in the proof under the case of completely known densities.

Using a similar argument as in Lemma \ref{lemma:lower_simple}, it can be shown that
\beq
\label{eq:lwer_bound_composite}
\bea{l}
\displaystyle\inf_s
\E(C(s))\sim\sum_{i=1}^{K_1}c_i\sum_{k=1}^{i}\frac{B_k^{(0)}}
            {D_k^*(\theta_k||\Theta_k^{(0)})} 
\;\;\mbox{as\;\;} P_e^{max}\rightarrow 0 \;.
\ena
\eeq
Next, we show that CL-$\pi c N$ achieves this bound.  \vspace{0.2cm}

\begin{definition}
$T_{ML}$ is defined as the smallest integer such that $\hat{\theta}_k(n)=\theta_k$ for all $k$ for all $n\geq T_{ML}$. \vspace{0.2cm}
\end{definition}
In the following lemma we show that $T_{ML}$ is sufficiently small.  \vspace{0.2cm}
\begin{lemma}
\label{lemma6:T_ML}
Assume that CL-$\pi c N$ is implemented indefinitely. Then, for every fixed $\nu>0$, there exists $\delta>0$ such that for all $1<\zeta\leq 1+\delta$ the following holds:
\beq
\label{eq:lemma6:T_ML}
\Pr\left(T_{ML}>n\right)\leq O(n^{-\nu}) \;. \vspace{0.4cm}
\eeq
\end{lemma}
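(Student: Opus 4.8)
The plan is to mirror the proof of Lemma \ref{lemma:T1_eps_simple}, replacing the belief-crossing event there by a consistency event for the maximum-likelihood estimate. Since $\theta_k$ ranges over a finite set, $\hat{\theta}_k(n)$ is simply the maximizer of the empirical log-likelihood over $\Theta_k$, so $\{\hat{\theta}_k(n)\neq\theta_k\}$ occurs if and only if some incorrect value $\theta'\neq\theta_k$ accumulates a log-likelihood at least as large as the true one; equivalently, the sum of per-sample log-likelihood ratios $Z_r^{(\theta')}\triangleq\log\frac{f_k(y_k(r)|\theta_k)}{f_k(y_k(r)|\theta')}$ over the observations drawn from process $k$ is nonpositive. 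Under the standing regularity conditions guaranteeing MLE consistency, each summand has strictly positive mean $D_k(\theta_k||\theta')>0$ by identifiability (positivity of the KL divergence between distinct values). First I would thus reduce the lemma to a left-tail large-deviation estimate for a random walk with positive drift.

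Next I would reuse the exploration guarantee: the round-robin subsequence ensures that by time $n$ at least $m(n)=\log n/(K\log\zeta)-O(1)$ observations have been collected from every undeclared process, exactly as in Lemma \ref{lemma:T1_eps_simple}. Because the selection rule is predictable, the observations gathered from process $k$ form a conditionally i.i.d. sequence, so Cram\'er's/Chernoff bound gives $\Pr\!\left(\sum_{r=1}^{m}Z_r^{(\theta')}\leq 0\right)\leq e^{-m\,I_{k,\theta'}}$ for some rate $I_{k,\theta'}>0$. To handle the fact that the \emph{number} of collected samples is random, I would dominate the wrong-MLE event by the boundary-crossing event $\inf_{m\geq m(n)}\sum_{r=1}^{m}Z_r^{(\theta')}\leq 0$ and sum the resulting geometric Chernoff tail. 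A union bound over the finitely many $\theta'\neq\theta_k$ and over the $K$ processes then yields $\Pr\!\left(\hat{\theta}_k(n)\neq\theta_k\text{ for some }k\right)\leq C\,n^{-I^*/(K\log\zeta)}$, where $I^*=\min_{k,\theta'\neq\theta_k}I_{k,\theta'}>0$.

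Finally, since $\{T_{ML}>n\}=\{\exists\,n'\geq n,\ \exists\,k:\hat{\theta}_k(n')\neq\theta_k\}$, a union bound over all future times gives $\Pr(T_{ML}>n)\leq C\sum_{n'\geq n}(n')^{-p(\zeta)}=O\!\left(n^{-(p(\zeta)-1)}\right)$ with $p(\zeta)\triangleq I^*/(K\log\zeta)$. As $\zeta\downarrow 1$ we have $p(\zeta)\to\infty$, so I would choose $\delta>0$ such that $p(\zeta)-1\geq\nu$ for all $1<\zeta\leq 1+\delta$, which establishes (\ref{eq:lemma6:T_ML}); this is the same mechanism by which $\zeta$ near $1$ is forced in Lemma \ref{lemma:T1_eps_simple}, and the argument parallels \cite{Nitinawarat_2013_Controlled}. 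The main obstacle is twofold: converting the \emph{random} sample count into a clean exponential tail via the infimum/boundary-crossing step, which requires the Chernoff rate to be uniform (automatic here, since $\Theta_k$ is finite and the log-likelihood ratios possess exponential moments under the regularity conditions), and the outer summation over all $n'\geq n$, which demands $p(\zeta)>\nu+1$ rather than merely $p(\zeta)>\nu$ and is precisely what drives the requirement that $\zeta$ be taken sufficiently close to $1$.
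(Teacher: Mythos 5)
Your proposal is correct and follows essentially the same route as the paper's (much terser) proof: a Chernoff/large-deviation bound showing the MLE over a finite parameter set is wrong with exponentially small probability in the per-process sample count, combined with the round-robin guarantee of at least $\log n/(K\log\zeta)$ samples per process, and then taking $\zeta$ close to $1$ so the resulting polynomial rate exceeds $\nu$. The extra care you take with the random sample count (via the boundary-crossing event) and with the fact that $\{T_{ML}>n\}$ involves all future times merely fills in details the paper leaves implicit, at the harmless cost of requiring $p(\zeta)>\nu+1$ instead of $p(\zeta)>\nu$.
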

\begin{proof}
Note that when $K=1$ (i.e., all the observations are taken from a single process), $\Pr\left(T_{ML}>n\right)$ decays exponentially with $n$ following the same argument as in \cite{Chernoff_1959_Sequential}. Furthermore, for large $n$, at least $\log n/(K\log \zeta)$ samples are taken from every process by time $n$. Thus, (\ref{eq:lemma6:T_ML}) follows when $\zeta$ is set sufficiently close to $1$.
\end{proof}  \vspace{0.2cm}

\begin{definition}
For every $0<\epsilon<1$, $T_1(\epsilon)$ is defined as the smallest integer such that $\hat{\pi}_k(n)\geq 1-\epsilon$ for all $k\in\mathcal{H}_1$ and $\hat{\pi}_k(n)\leq\epsilon$ for all $k\in\mathcal{H}_0$ for all $n\geq T_1(\epsilon)$. \vspace{0.2cm}
\end{definition}
In the following lemma we show that $T_1(\epsilon)$ is sufficiently small.  \vspace{0.2cm}
\begin{lemma}
\label{lemma:T1_eps}
Assume that CL-$\pi c N$ is implemented indefinitely. Then, for every fixed $0<\epsilon<1$ and $\nu>0$, there exists $\delta>0$ such that for all $1<\zeta\leq 1+\delta$ the following holds:
\beq
\label{eq1:lemma:T1_eps}
\Pr\left(T_1(\epsilon)>n\right)\leq O(n^{-\nu}) \;. \vspace{0.4cm}
\eeq
\end{lemma}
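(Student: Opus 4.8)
The plan is to reduce the composite setting to the known-parameter analysis of Lemma \ref{lemma:T1_eps_simple} by first freezing the ML estimates and then reusing the crossing argument verbatim. As a preliminary step I would rewrite the belief recursion (\ref{eq:beleif_ind_composite}) in log-odds form, obtaining
\beq
\hat{\pi}_k(n)=\left(d_k\,e^{-\hat{S}_k(n)}+1\right)^{-1},
\eeq
where $d_k=\frac{1-\pi_k(1)}{\pi_k(1)}$ and $\hat{S}_k(n)=\sum_{r=1}^{n}\mathbf{1}_k(r)\log\frac{f_k(y_k(r)|\hat{\theta}_k^{(1)}(n))}{f_k(y_k(r)|\hat{\theta}_k^{(0)}(n))}$ is the running log-likelihood ratio evaluated at the current constrained ML estimates, the sum ranging over the samples drawn from process $k$ up to time $n$. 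Exactly as in Lemma \ref{lemma:T1_eps_simple}, $\hat{\pi}_k(n)\geq 1-\epsilon$ iff $\hat{S}_k(n)\geq M_k^{(1)}$ and $\hat{\pi}_k(n)\leq\epsilon$ iff $\hat{S}_k(n)\leq -M_k^{(0)}$, with the same thresholds $M_k^{(1)},M_k^{(0)}$.

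Next I would condition on the ML estimates being correct. By Lemma \ref{lemma6:T_ML}, for $n\geq T_{ML}$ we have $\hat{\theta}_k(n)=\theta_k$ for all $k$; in the finite-valued model of \cite{Chernoff_1959_Sequential} the constrained estimates $\hat{\theta}_k^{(1)}(n),\hat{\theta}_k^{(0)}(n)$ appearing in $\hat{S}_k(n)$ likewise lock onto their Kullback--Leibler projection limits, with per-time misidentification probability decaying exponentially in the number of samples drawn from process $k$. Once the estimates are frozen, the increments of $\hat{S}_k(n)$ are i.i.d.\ true log-likelihood ratios with strictly positive mean $D_k(\theta_k||\Theta_k^{(0)})>0$ for every $k\in\mathcal{H}_1$ and strictly negative mean $-D_k(\theta_k||\Theta_k^{(1)})<0$ for every $k\in\mathcal{H}_0$. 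Hence, conditioned on estimate-correctness, the belief dynamics coincide with those of the known-parameter case, and the drift of $\hat{S}_k$ has the sign needed to drive the belief across its threshold.

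I would then conclude by a union bound over future times. Since $T_1(\epsilon)>n$ means the settling condition fails at some $n'\geq n$,
\beq
\Pr\left(T_1(\epsilon)>n\right)\leq\sum_{n'\geq n}\Pr\left(\exists\,k:\ \hat{\pi}_k(n')\ \mbox{on the wrong side}\right).
\eeq
For each $n'$ I would split the per-time failure into the event that some ML estimate is wrong at $n'$ and the event that the estimates are correct but $\hat{S}_k(n')$ has not yet crossed its threshold. The round-robin subsequence $\mathcal{N}_s$ guarantees at least $\log n'/(K\log\zeta)$ samples from each process by time $n'$, so both events have probability decaying exponentially in $\log n'/(K\log\zeta)$, i.e.\ bounded by $O((n')^{-\mu})$ with $\mu$ proportional to $1/\log\zeta$. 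Choosing $\zeta$ sufficiently close to $1$ forces $\mu\geq\nu+1$, whence $\sum_{n'\geq n}O((n')^{-(\nu+1)})=O(n^{-\nu})$, establishing (\ref{eq1:lemma:T1_eps}).

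The hard part will be the composite-specific reduction in the second step: verifying that correctness of the unconstrained ML estimate forces the \emph{constrained} estimates entering (\ref{eq:beleif_ind_composite}) to their limiting KL projections, and that the induced drift of $\hat{S}_k$ carries the correct sign uniformly in $k$. This is precisely where the finite-parameter assumption of \cite{Chernoff_1959_Sequential} and the indifference-region separation of $\Theta_k^{(0)}$ and $\Theta_k^{(1)}$ enter; once this reduction is secured, the concentration and union-bound arguments are identical to those already used in Lemmas \ref{lemma6:T_ML} and \ref{lemma:T1_eps_simple}, following \cite{Nitinawarat_2013_Controlled}.
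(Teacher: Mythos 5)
Your proposal is correct and follows essentially the same route as the paper: decompose on the event $\{T_{ML}\leq n\}$ via Lemma \ref{lemma6:T_ML}, rewrite the belief in log-odds form so that the threshold crossings $\hat{\pi}_k(n)\gtrless\epsilon$ reduce to the GLR statistics crossing $M_k^{(0)},M_k^{(1)}$, observe that after $T_{ML}$ the increments are i.i.d.\ with KL-divergence drift of the correct sign, and invoke the $\log n/(K\log\zeta)$ round-robin sample guarantee with $\zeta$ close to $1$. Your explicit union bound over $n'\geq n$ (needed because $T_1(\epsilon)$ is a last-failure time) is a detail the paper leaves implicit by deferring to \cite{Nitinawarat_2013_Controlled}, but it is the same argument.
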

\begin{proof}
Note that:
\beq
\bea{l}
\Pr\left(T_1(\epsilon)>n\right)\vspace{0.2cm}\\\hspace{0.5cm}
\leq\Pr\left(T_1(\epsilon)>n,T_{ML}\leq n\right)+\Pr\left(T_{ML}> n\right)\;.
\ena
\eeq
The term $\Pr\left(T_{ML}> n\right)$ decays polynomially with $n$ by applying Lemma \ref{lemma6:T_ML}. Thus, it suffices to show that $\Pr\left(T_1(\epsilon)>n,T_{ML}\leq n\right)$ decays polynomially with $n$.

Let $d_k\triangleq\frac{1-\pi_k(1)}{\pi_k(1)}$ and
\beq
\bea{l}
\displaystyle M_k^{(1)}\triangleq-\log\left(\frac{\epsilon}{d_k(1-\epsilon)}\right)\;, \vspace{0.2cm}\\
\displaystyle M_k^{(0)}\triangleq-\log\left(\frac{d_k\epsilon}{1-\epsilon}\right)\;.
\ena
\eeq
By rewriting the update formula in (\ref{eq:beleif_ind}), it can be shown that:
\beq
\label{eq2:lemma:T1_eps}
\displaystyle\hat{\pi}_k(n)=\left(d_k e^{-S_k^{(1),GLR}(n)}+1\right)^{-1},
\eeq
for all $k\in\mathcal{H}_1$ for all $n\geq T_{ML}$,\\
and
\beq
\label{eq2:lemma:T1_eps_H_0}
\displaystyle\hat{\pi}_k(n)=\left(d_k e^{S_k^{(0),GLR}(n)}+1\right)^{-1},
\eeq
for all $k\in\mathcal{H}_0$ for all $n\geq T_{ML}$.\\
As a result, $\hat{\pi}_k(n)\geq 1-\epsilon$ iff $S_k^{(1),GLR}(n)\geq M_k^{(1)}$ for all $k\in\mathcal{H}_1$ and $\hat{\pi}_k(n)\leq\epsilon$ iff $S_k^{(0),GLR}(n)\geq M_k^{(0)}$ for all $k\in\mathcal{H}_0$ for all $n\geq T_{ML}$. Thus, it suffices to show that $\Pr(S_k^{(1),GLR}(n)\leq M_k^{(1)}|n\geq T_{ML})$ for all $k\in\mathcal{H}_1$ and $\Pr(S_k^{(0),GLR}(n)\leq M_k^{(0)}|n\geq T_{ML})$ for all $k\in\mathcal{H}_0$
decay polynomially with $n$.
Note that when $T_{ML}\leq n$ occurs, $S_k^{(0),GLR}(n)$ for all $k\in\mathcal{H}_1$ and $S_k^{(1),GLR}(n)$ for all $k\in\mathcal{H}_0$ are sums of i.i.d. r.v. with positive KL divergence (since $\hat{\theta}_k(n)=\theta_k$ for all $n\geq T_{ML}$). Since at least $\log n/(K\log \zeta)$ samples are taken from every process by time $n$, the lemma follows.
\end{proof}  \vspace{0.2cm}

The rest of the proof follows with minor modifications to the proof under the case of completely known densities.

\subsection{Proof of Theorem \ref{th:asymptotic_optimality_multi}}
\label{app:asymptotic_optimality_multi}

In this appendix we prove the asymptotic optimality of CL-$\pi c N$ under multi-process probing when $c\triangleq c_1=c_2=\cdots=c_K$. Throughout the proof we omit steps that use similar arguments as in the proof under single-process probing. We also use similar notations as in App. \ref{app:asymptotic_optimality_simple}.

First, we establish the asymptotic lower bound on the expected cost that can be achieved by any policy. Using the same notations as in the proof of Lemma \ref{lemma:lower_simple}, we aim to lower-bound $C_{\mathcal{Y}_\epsilon(s)}(\mathbf{y})$ using the definition of $C_{\tilde{\mathcal{Y}}_\epsilon(s)}(\tilde{\mathbf{y}})$.
Recall that $C_{\tilde{\mathcal{Y}}_\epsilon(s)}(\tilde{\mathbf{y}})$ is the total cost incurred by the modified vectors of observations with a fixed sample size.

Next, we apply \cite[Theorem 5.4.2]{Pinedo_2012_Scheduling} to minimize $C_{\tilde{\mathcal{Y}}_\epsilon(s)}(\tilde{\mathbf{y}})$. In \cite{Pinedo_2012_Scheduling}, the problem of ordering jobs with fixed processing times over $M$ parallel machines was considered. It was shown that scheduling the jobs in decreasing order of $1/\tilde{N}_k$, where $\tilde{N}_k$ is the processing time for job $k$, minimizes the sum completion times of the jobs. When applying \cite[Theorem 5.4.2]{Pinedo_2012_Scheduling} to our case, the sum completion times for the modified observation vectors is $\frac{1}{c}C_{\tilde{\mathcal{Y}}_\epsilon(s)}(\tilde{\mathbf{y}})$ when all the abnormal processes incur the same cost $c$ per unit time. Since $c=c_1=\cdots=c_K$ by assumption (and in particular $c=c_1=\cdots=c_{K_1}$ for any realization of the true system state), we can apply \cite[Theorem 5.4.2]{Pinedo_2012_Scheduling}. As a result, minimizing $C_{\tilde{\mathcal{Y}}_\epsilon(s)}(\tilde{\mathbf{y}})$ is done by ordering the processes in decreasing order of $1/\tilde{N}_k$. Let
\beq
\displaystyle \tilde{c}_k=
\begin{cases}
c \;,\;\mbox{if\;} k\in \mathcal{H}_1\;, \vspace{0.2cm}\\
0 \;,\;\mbox{otherwise} \;.
\end{cases}
\eeq
Note that minimizing $C_{\tilde{\mathcal{Y}}_\epsilon(s)}(\tilde{\mathbf{y}})$ by ordering the modified observation vectors in decreasing order of $1/\tilde{N}_k$ implies that at each given time the $M$ vectors with the smallest sample sizes among the remaining vectors contribute to the total cost. As a result, Similar to (\ref{app:l1_lower_bound}), for any $\epsilon>0$, we can lower bound the actual cost by the cost achieved by minimizing $C_{\tilde{\mathcal{Y}}_\epsilon(s)}(\tilde{\mathbf{y}})$:
\beq
\bea{l}
\displaystyle \inf_s C_{\mathcal{Y}_\epsilon(s)}(\mathbf{y})
\geq(1-\epsilon)\sum_{m=1}^{M}\sum_{i=1}^{\lceil K_1/M\rceil}\tilde{c}_{m+(i-1)M} \times\vspace{0.2cm} \\ \hspace{2.5cm}
\displaystyle\sum_{k=1}^{i}\frac{B_{m+(k-1)M}}{D(f_{m+(k-1)M}^{(1)}||f_{m+(k-1)M}^{(0)})} \vspace{0.3cm} \\ \hspace{5cm}
\mbox{as\;\;} P_e^{max}\rightarrow 0\;,
\ena
\eeq
Hence, following the same argument as in Lemma \ref{lemma:lower_simple}, we obtain:
\beq
\label{eq:lower_bound_multi}
\bea{l}
\displaystyle\inf_s \E(C(s)) \geq (1-o(1))\sum_{m=1}^{M}\sum_{i=1}^{\lceil K_1/M\rceil}\tilde{c}_{m+(i-1)M} \times\vspace{0.2cm} \\ \hspace{2.5cm}
\displaystyle\sum_{k=1}^{i}\frac{B_{m+(k-1)M}}{D(f_{m+(k-1)M}^{(1)}||f_{m+(k-1)M}^{(0)})}  \;,
\ena
\eeq
where $o(1)\rightarrow 0$ as $P_e^{max}\rightarrow 0$.

Next, we show that CL-$\pi c N$ achieves the lower bound (\ref{eq:lower_bound_multi}) in the asymptotic regime. Following the definition of $T_1$, for all $n\geq T_1$, CL-$\pi c N$ tests the processes in the desired order required to obtain the lower bound as specified in (\ref{eq:lower_bound_multi}). Note that by applying Lemma \ref{lemma:T1}, we can set $\zeta>1$ sufficiently close to $1$, such that $\Pr\left(T_1>n\right)\leq O(n^{-\nu})$ for an arbitrarily large $\nu>0$. Therefore, similar to (\ref{app:l4_upper_bound}), (\ref{lemma4:eq2}), we have:
\beq
\label{app:pr3_upper_bound}
\bea{l}
\displaystyle \E(C^*)\leq(1+o(1))\sum_{m=1}^{M}\sum_{i=1}^{\lceil K_1/M\rceil}\tilde{c}_{m+(i-1)M} \times\vspace{0.2cm} \\ \hspace{1cm}
\displaystyle\sum_{k=1}^{i}\frac{B_{m+(k-1)M}}{D(f_{m+(k-1)M}^{(1)}||f_{m+(k-1)M}^{(0)})}+O(\log B_1)  \;,
\ena
\eeq
where $o(1)\rightarrow 0$ as $P_e^{max}\rightarrow 0$.\\
Combining (\ref{eq:lower_bound_multi}) and (\ref{app:pr3_upper_bound}) completes the proof.

\bibliographystyle{ieeetr}

%
%
\end{document}